\documentclass[a4paper,reqno]{amsart}

\calclayout

\usepackage{amsmath}
\usepackage{amssymb}
\usepackage{amsthm}
\usepackage{amsfonts}
\usepackage{mathtools}
\usepackage{nicematrix}
\usepackage{bm}
\usepackage{graphicx}
\graphicspath{ {./figures/} }
\usepackage{adjustbox}
\usepackage[inline]{enumitem}
\usepackage{url}
\usepackage{epsfig}
\usepackage{color}
\usepackage{float}
\usepackage{setspace}
\usepackage{comment}
\usepackage{appendix}
\usepackage{tikz}
\usetikzlibrary{cd}
\usetikzlibrary{decorations.pathmorphing}
\usepackage[hang,small,bf]{caption}
\usepackage[subrefformat=parens]{subcaption}
\usepackage{wrapfig}
\usepackage[bookmarksnumbered,bookmarksopen,pdfencoding=auto]{hyperref}
\hypersetup{
    hidelinks,
    linktocpage=true,
    setpagesize=false,
}
\allowdisplaybreaks

\DeclarePairedDelimiter{\abs}{\lvert}{\rvert}
\DeclarePairedDelimiter{\fat}{\lVert}{\rVert}

\theoremstyle{definition}
\newtheorem{thm}{Theorem}[section]

\newtheorem{lem}[thm]{Lemma}
\newtheorem{cor}[thm]{Corollary}
\newtheorem{prop}[thm]{Proposition}

\newtheorem{ex}[thm]{Example}

\newtheorem{defi}[thm]{Definition}
  
\theoremstyle{remark}
\newtheorem{rem}{Remark}[section]

\title{Cohomology for solutions of polygon equations}

\author{Serban Matei Mihalache}
\address{Graduate School of Mathematical Sciences, The University of Tokyo
3-8-1 Komaba, Meguro, Tokyo, 153-8914 Japan}
\email{mateimihalache@g.ecc.u-tokyo.ac.jp}

\author{Tomoro Mochida}
\address{Mathematical Institute, Tohoku University, 
6-3, Aramaki Aza-Aoba, Aoba-ku, 
Sendai, 980-8578, Japan,}
\email{tomorou.mochida.r5@dc.tohoku.ac.jp}

\begin{document}

\begin{abstract}
Polygon equations form a family of equations generalizing the pentagon equation. In this paper, we construct semi-simplicial sets of permitted colorings associated with set-theoretic solutions of polygon equations and use them to define the corresponding (co)homology groups.
We investigate several properties of these groups and establish an equivalence of categories between set-theoretic solutions of polygon equations and higher Segal semi-simplicial sets satisfying certain conditions.
As a special case, our result recovers the correspondence between bijective set-theoretic solutions of the pentagon equation and $2$-Segal semi-simplicial sets proved by Dyckerhoff--Kapranov.
\end{abstract}

\maketitle
\thispagestyle{empty}
\makeatletter

\section{Introduction}

Polygon equations were introduced by Dimakis and M\"{u}ller-Hoissen in \cite{dimakis2012kp, dimakis2015simplex} as algebraic realizations of maximal chains in higher Stasheff--Tamari orders.
They form a family of equations indexed by an integer $n \geq 3$, where the equation associated with $n$ is called the $n$-gon equation.
In the case of $n=5$, this recovers the well-known pentagon equation $T_{12}T_{13}T_{23}=T_{23}T_{12}$.
Thus, polygon equations may be thought of as higher-order analogs of the pentagon equation. 

Another family of equations closely related to polygon equations are simplex equations~\cite{bazhanov1982conditions}, which generalize the Yang--Baxter equation $R_{12}R_{13}R_{23}=R_{23}R_{13}R_{12}$~\cite{baxter1972partition,yang1967some} and the tetrahedron equation $R_{123}R_{145}R_{246}R_{356}=R_{356}R_{246}R_{145}R_{123}$~\cite{zamolodchikov1980tetrahedra,zamolodchikov1981tetrahedron}. The relationship between these equations has been studied, for example, in \cite{kashaev1998pentagon,dimakis2015simplex,dimakis2021grassmannian,kassotakis2024entwining,mihalache2025constructing}. Also,
a cohomology theory for simplex equations was defined by Korepanov, Sharygin, and Talalaev~\cite{korepanov2016cohomologies}, generalizing quandle cohomology~\cite{carter2003quandle} and Yang--Baxter cohomology~\cite{carter2004homology}.

Polygon equations are also related to higher Segal conditions introduced by Dyckerhoff and Kapranov~\cite{dyckerhoff2019higher}. The 2-Segal case was also independently studied by Gálvez-Carrillo, Kock, and Tonks~\cite{galvez2018decomposition} under the name of decomposition spaces. See also \cite{poguntke2017higher}. These conditions are higher-dimensional analogs of the ordinary Segal condition and describe when a simplex can be uniquely reconstructed from the data associated with a triangulation of a cyclic polytope. In dimension two, Dyckerhoff and Kapranov~\cite{dyckerhoff2019higher} showed that bijective set-theoretic solutions of the pentagon equation correspond to certain $2$-Segal semi-simplicial sets.

\medskip

The aim of this paper is to develop a cohomology theory for set-theoretic solutions of polygon equations and to relate the associated semi-simplicial sets to higher Segal conditions. 
The cohomology of polygon equations was considered for the $6$-gon equation by Korepanov and Sadykov~\cite{korepanov2017hexagon} (see also \cite{korepanov2019polynomial,korepanov2021nonconstant}) and for odd-gon equations (i.e., polygon equations of odd order) by Korepanov~\cite{korepanov2024odd}. Following their construction, as well as the construction in \cite{korepanov2016cohomologies} for simplex equations, this paper defines the cohomology groups of polygon equations of arbitrary order and investigates their properties. Roughly, given a solution of a polygon equation, we construct a semi-simplicial set and then define the corresponding cohomology groups:
\[
\begin{tikzcd}
    {\text{$T$: solution of $n$-gon}} &
    {\text{$\operatorname{Col}_T$: semi-simplicial set}} &
    {\text{$H^*(T)$: cohomology of $T$}}
    \arrow[rightsquigarrow, from=1-1, to=1-2]
    \arrow[rightsquigarrow, from=1-2, to=1-3]
\end{tikzcd}
\]
This semi-simplicial viewpoint also provides a natural link between the solution of polygon equations and higher Segal conditions.

We describe bases for the chain groups using colorings of certain faces. For solutions of the $n$-gon equation for small $n$, we provide an explicit boundary formula in which group cohomology (the bar construction) naturally appears. We then study several properties of these cohomology groups. In particular, we obtain some vanishing results and the relationship between the cohomology of solutions of neighboring polygon equations. Finally, we characterize the semi-simplicial sets arising from solutions of polygon equations in terms of higher Segal conditions. More precisely, we establish an equivalence between the category of solutions of polygon equations and the category of higher Segal semi-simplicial sets satisfying certain conditions in low degrees. For the pentagon equation, this recovers the correspondence of Dycherhoff and Kapranov~\cite{dyckerhoff2019higher}.

\medskip

We end this section by discussing a motivation for considering such a cohomology theory.
One motivation comes from PL topology, where polygon equations can be thought of as algebraic realizations of certain Pachner moves of triangulations (Section~\ref{sec:polygon_equations}). Solutions of polygon equations are used to construct coloring invariants of PL manifolds. A sketch of the construction is as follows:

Let 
\begin{itemize}
    \item $T$: a solution of the (dual) $n$-gon equation over a finite set $X$,
    \item $M$: an $m$-dimensional closed oriented PL manifold, such that $m\geq n-2$,
    \item $K$: a triangulation of $M$ with a fixed total order on the set of vertices.
\end{itemize}
Then we can define the set $\operatorname{Col}_T(K)$ of $T$-colorings of $K$, which consists of maps 
\[
c\colon K_{n-3}=\{\text{$(n-3)$-simplices of $K$}\}\to X
\]
such that it ``coincides'' with the map $T$ on all $(n-2)$-simplices of $K$ (Definitions~\ref{def:T-coloring} and ~\ref{def:polygon_coloring}).
We then define an integer by $|\operatorname{Col}_T(K)|$, which we want to be an invariant of $M$.
There are two obvious obstructions to this integer being an invariant, which are also pointed out in \cite{dyckerhoff2026cyclic}:
\begin{enumerate}
    \item One must show that $|\operatorname{Col}_T(K)|$ is independent of the fixed total order on the set of vertices. 
    \item In general, $T$ only guarantees the invariance of $|\operatorname{Col}_T(K)|$ under a specific Pachner move, while we need to prove invariance under all Pachner moves.
\end{enumerate}

Now let us try to generalize this construction.
One way to do this is to treat $|\operatorname{Col}_T(K)|$ as a state-sum and twist its weights.
Take a map $\omega\colon \operatorname{Col}_T(\Delta^m)\to \mathbb{C}^*$ and consider the following scalar:

\[
Z_{\omega}(K;T) \coloneqq N_K\!\!\sum_{c\in\operatorname{Col}_T(K)}\,\,\prod_{\substack{\sigma \in K_m}} \omega(c|_{\sigma})^{\varepsilon(\sigma)}, 
\]
where the sign $\varepsilon(\sigma)$ is $1$ if the orientation induced by the vertex order agrees with the orientation from $M$, and $-1$ otherwise, and $N_K$ is an appropriate normalization term. When $\omega\equiv 1$, this recovers the original scalar $|\operatorname{Col}_T(K)|$. Again, for this to be invariant under Pachner moves, $\omega$ has to satisfy some cocycle conditions. 
This is what motivates us to define and study the cohomology associated with solutions of the $n$-gon equation.

In the $n=4$ case, a solution of the dual $4$-gon equation is just an associative product on $X$.
If we further assume that this product forms a group, the resulting coloring invariant turns out to be the untwisted Dijkgraaf--Witten invariant of $M$. Furthermore, if we consider the cohomology of the dual 4-gon equation (this cohomology turns out to match the group cohomology), the resulting cocycle invariant coincides with the Dijkgraaf--Witten invariant~\cite{dijkgraaf1990topological,wakui1992dijkgraaf}.

For readers familiar with the theory of (bi)quandles, we note that this is closely parallel to the theory of (bi)quandle colorings and (bi)quandle cocycle invariants of knots or surface knots~\cite{carter2003quandle}. In that setting, solutions of the Yang--Baxter equation and the tetrahedron equation play a role in ensuring the invariance under certain moves of diagrams of knots or surface knots.
\medskip

The rest of the paper is organized as follows: In Section~\ref{sec:polygoncohomology}, we define a semi-simplicial set associated with an arbitrary subset and then define a chain complex and its (co)homology. In Section~\ref{sec:polygon_equations}, we define polygon equations and introduce certain subsets associated with their solutions. We then define the cohomology of a solution of a polygon equation to be the cohomology of the corresponding subset. In Section~\ref{sec:chain_structure}, we study the structure of these chain complexes and give bases for their chain groups. Section~\ref{sec:examples} presents some examples. In particular, we consider the cohomology of a solution of the $n$-gon equation for small $n$. In Section~\ref{sec:properties}, we study properties of polygon cohomology, some of which are related to our previous work~\cite{mihalache2025constructing}. 
In Section~\ref{sec:segal}, we establish a correspondence between solutions of polygon equations and higher Segal semi-simplicial sets.

\subsection*{Acknowledgment}

We would like to thank Yuji Terashima for helpful discussions. The second author is supported by JSPS KAKENHI Grant Number JP26KJ0522.

\section{Definition of Polygon (co)homology}\label{sec:polygoncohomology}

\subsection{Notation}

Let us fix the notation used throughout the paper.
Let $\Delta^{N}$ be the standard $N$-simplex, viewed as the simplicial complex whose vertex set is an ordered set $\{0,1, \ldots , N\}$ and whose simplices are all subsets of this vertex set:
\begin{equation*}
    \Delta^N=[0,1,\ldots, N].
\end{equation*}
Since the vertices of $\Delta^{N}$ are ordered, we identify an $n$-simplex $\sigma\subset \Delta^{N}$ with a set of vertices 
\begin{equation*}
    \sigma = [v_0, v_1, \ldots , v_n] \quad (v_0 < v_1 < \cdots < v_n).
\end{equation*}
For an $n$-simplex $\sigma=[v_0, v_1, \ldots , v_n]\subset \Delta^{N}$ and $0 \leq i \leq n$, set
\begin{equation*}
    \partial_i \sigma \coloneqq [v_0,\ldots,\hat{v_i},\ldots,v_n],
\end{equation*}
which is an $(n-1)$-simplex obtained by removing the $i$-th vertex $v_i$ of $\sigma$.
We denote by $\Delta^{N}_{n}$ the set of all $n$-simplices in $\Delta^{N}$.
\medskip

\subsection{Definition of (co)homology}\label{subsec:poly_cohomology}

We fix an integer $n \geq 3$.
Let $N\geq n-3$ be an integer, $X$ be a set, and 
\begin{equation}\label{eq:subset}
    T \subset X^{\times (n-1)}.
\end{equation}

\begin{defi}\label{def:T-coloring}
    A $\boldsymbol{T}$\textbf{-coloring} of $\Delta^{N}$ is a map $c:\Delta^{N}_{n-3} \to X$
such that for each $\sigma \in \Delta^N_{n-2}$, 
\begin{align*}
   (c(\partial_0 \sigma), c(\partial_1 \sigma), \ldots , c(\partial_{n-2} \sigma)) \in T.
\end{align*}
When $N=n-3$, we define a $T$-coloring to be any map $c\colon \Delta^{n-3}_{n-3}=\{[0,\ldots,n-3 ]\} \to X$. 
We denote by $\operatorname{Col}_T(\Delta^N)$ the set of all $T$-colorings of $\Delta^N$.
\end{defi}

\begin{rem}
    After defining the (dual) $n$-gon equation and its solution $T$ in Section \ref{sec:polygon_equations}, we will only consider subsets $T\subset X^{\times (n-1)}$ of the form given in Definition \ref{def:polygon_coloring}.
\end{rem}

\subsection*{Semi-simplicial set}
Given a subset $T$ of $X^{\times (n-1)}$, we define a semi-simplicial set $\operatorname{Col}_T$ by
\begin{equation*}
    (\operatorname{Col}_T)_N = \left\{\begin{array}{cc}
       \operatorname{Col}_T(\Delta^N) &  \text{if $N\geq n-3$,}\\
       \{*\}  & \text{if $N< n-3$}
    \end{array}\right.
\end{equation*}
and the face maps $d_i^N \colon (\operatorname{Col}_T)_N \to (\operatorname{Col}_T)_{N-1}$ are given as follows:
Let $N \geq n-2$.
For $0\leq i \leq N$, define $\varepsilon_i\colon \{0,\ldots,N-1\}\to \{0,\ldots,N\}$ by
\begin{align*}
\varepsilon_i(v)\coloneq
    \begin{cases}
        v & v < i,\\
        v+1 & v \geq i.
    \end{cases}
\end{align*}
For $\sigma = [v_0,\ldots,v_{n-3}]\in\Delta^{N-1}_{n-3}$, set 
\begin{align*}
    \delta^i_N(\sigma)\coloneq [\varepsilon_i(v_0,\ldots,\varepsilon_i(v_{n-3})].
\end{align*}
Then, for $0 \leq i \leq N$ and a $T$-coloring $c\in \operatorname{Col}_T(\Delta^N)$, we define the $T$-coloring $d_i^N(c) \in \operatorname{Col}_T(\Delta^{N-1})$ by 
\begin{align*}
    (d_i^N(c))(\sigma) \coloneqq c(\delta_N^i(\sigma)) \qquad \text{for } \sigma\in\Delta^{N-1}_{n-3}.
\end{align*}
In other words, we view $\Delta^{N-1}$ as a subcomplex of $\Delta^N$ by canonically identifying $\Delta^{N-1}$ as $\partial_i \Delta^N$ in a manner that preserves the order of the vertices. Then, $d_i^N(c)$ is defined by the restriction 
\begin{equation*}
    c\rvert_{(\partial_i\Delta^{N})_{n-3}} \colon \Delta^{N-1}_{n-3} \cong (\partial_i\Delta^N)_{n-3} \subset \Delta^N \to X.
\end{equation*} 
It is then easy to check that $d_id_j=d_{j-1}d_i$ for all $i<j$.

\subsection*{Chain complex and (co)homology}
For a set $X$, $\mathbb{Z}\langle X \rangle$ denotes the free abelian group generated by $X$.

We define the abelian group $C_N(T)$ by 
\begin{equation*}
    C_N(T) \coloneqq \left\{\begin{array}{cc}
       \mathbb{Z}\langle\operatorname{Col}_T(\Delta^N)\rangle  &  \text{if $N\geq n-3$,}\\
       0  & \text{if $N< n-3$,}
    \end{array}\right.
\end{equation*}
and the abelian group $C^N(T)$ is defined as the dual of $C_N(T)$:
\begin{equation*}
    C^N(T) \coloneqq \left\{\begin{array}{cc}
        \operatorname{Map}(\operatorname{Col}_T(\Delta^N)\,,\, \mathbb{Z}) & \text{if $N \geq n-3$,} \\
        0 & \text{if $N < n-3$.}
    \end{array}\right.
\end{equation*}
The (co)boundary maps $\partial_N \colon C_N(T) \to C_{N-1}(T)$ and $\delta^N \colon C^{N-1}(T) \to C^{N}(T)$ are, as usual, defined by 
\begin{equation*}
    \partial_N \coloneqq \sum_{i=0}^{N} (-1)^{i} d_i^N\quad\text{and}\quad \delta^N \coloneqq \partial_{N}^*
\end{equation*}
for $N\geq n-2$, and $\partial_N\coloneqq0$ and $\delta^N\coloneqq0$ for $N < n-2$.

\begin{defi}
     Let $T\subset X^{\times (n-1)}$ be a subset. For an abelian group $A$, we define the chain complex $(C_\bullet(T;A), \partial_\bullet)$ and the cochain complex $(C^{\bullet}(T;A), \delta^\bullet)$ by
    \begin{alignat*}{2}
        C_\bullet(T;A) &\coloneqq C_\bullet(T) \otimes A, \quad & \partial_\bullet &\coloneqq \partial_\bullet \otimes \mathrm{id}_A, \\
        C^\bullet(T;A) &\coloneqq \operatorname{Hom}(C_\bullet(T), A), \quad & \delta^\bullet &\coloneqq \operatorname{Hom}(\partial_\bullet,\mathrm{id}_A)
    \end{alignat*}
    in the usual way. The \textbf{polygon homology} and \textbf{cohomology} groups with coefficients in $A$ are defined by 
    \begin{equation*}
        H_*(T;A) \coloneqq H_*(C_{\bullet}(T;A), \partial_{\bullet}), \quad H^*(T;A) \coloneqq H^*(C^{\bullet}(T;A), \delta^{\bullet}),
    \end{equation*}
    respectively. As usual, if the coefficients are in $\mathbb{Z}$, we write $H_*(T)$ and $H^*(T)$ for $H_*(T;\mathbb{Z})$ and $H^*(T;\mathbb{Z})$, respectively.
\end{defi}

\begin{rem}
    Let $\fat{\operatorname{Col}_T}$ be the (fat) geometric realization, and $\fat{\operatorname{Col}_T}^{(n-4)}$ be the $(n-4)$-skeleton of $\fat{\operatorname{Col}_T}$.
    Then, it is easy to see that
    \begin{equation*}
        H_*(T;A) \cong H_*(\fat{\operatorname{Col}_T}, \fat{\operatorname{Col}_T}^{(n-4)};A),
    \end{equation*}
    i.e., the relative homology of the pair $(\fat{\operatorname{Col}_T}, \fat{\operatorname{Col}_T}^{(n-4)})$.
\end{rem}

\section{Polygon equations}\label{sec:polygon_equations}

Originally, polygon equations were introduced in \cite{dimakis2012kp, dimakis2015simplex} as algebraic realizations of higher Stasheff--Tamari orders.
In this section, following \cite{kashaev2015realizations}, we define the $n$-gon equation in terms of an algebraic realization of a certain Pachner move on the standard simplex $\Delta^{n-1}$.
For the diagrammatic definition, see \cite{mihalache2025constructing}.

\subsection{Definition of polygon equation}

Let $n\geq 3$ be an integer, and $\Delta^{n-1}=[0,1,\ldots, n-1]$ be the standard $(n-1)$-simplex. 
Given a set $X$ and a map
\[
T\colon X^{\times\lfloor\frac{n-1}{2}\rfloor}\to X^{\times\lfloor\frac{n}{2}\rfloor},
\]
we define the \textbf{set-theoretic polygon equation} as follows:
\begin{itemize}
    \item To each $(n-3)$-simplex $s \in \Delta^{n-1}$, we associate a copy of the set $X$, denoted by $X(s)$.
    \item To each $(n-2)$-simplex $p \in \Delta^{n-1}$, we associate a copy of the map $T$, denoted by $T(p)$: 
\begin{equation*}
    T(p)\colon X(\partial_1 p)\times X(\partial_3 p)\times\cdots\times X(\partial_{2\lfloor\frac{n-1}{2}\rfloor-1} p)\to X(\partial_0 p)\times X(\partial_2 p)\times\cdots\times X(\partial_{2\lfloor\frac{n}{2}\rfloor -2} p).
\end{equation*}
\end{itemize}
As indicated, each input factor and output factor of $T(p)$ are labeled by the $(n-3)$-simplex in the boundary of $p$.

Next we partition the set of $(n-2)$-dimensional faces of $\Delta^{n-1}$ into $D^+$ and $D^-$:
\begin{equation*}
    D^{+} \coloneqq \{\,\partial_i\Delta^{2k}\,|\,\text{$0\leq i \leq 2k$, $i$ is even}\,\},\qquad 
    D^{-} \coloneqq \{\,\partial_j\Delta^{2k}\,|\,\text{$0\leq j \leq 2k$, $j$ is odd}\,\}
\end{equation*}
We compose the maps $T(\partial_i\Delta^{n-1})$ for $\partial_i\Delta^{n-1} \in D^{+}$ along the matching labeled outputs and inputs, and arrange the remaining free outputs and free inputs, each in reverse lexicographic order. We do the same for $T(\partial_j\Delta^{n-1})$ for $\partial_j\Delta^{n-1} \in D^{-}$. The resulting equality between these two compositions is called the \textbf{$\boldsymbol{n}$-gon equation}. Figure~\ref{fig:graphical45goneq} is a diagrammatic picture of polygon equations. 

Similarly, we define the \textbf{set-theoretic dual polygon equation} as follows.
For a map $S\colon X^{\times\lfloor\frac{n}{2}\rfloor}\to X^{\times\lfloor\frac{n-1}{2}\rfloor}$, we associate 
\begin{equation*}
    S(p)=S\colon X(\partial_0 p)\times X(\partial_2 p)\times\cdots\times X(\partial_{2\lfloor \frac{n}{2}\rfloor -2} p)\to X(\partial_1 p)\times X(\partial_3 p)\times\cdots\times X(\partial_{2\lfloor\frac{n-1}{2}\rfloor-1} p)
\end{equation*}
to each $(n-2)$-simplex $p\subset\Delta^{n-1}$.
The exact same procedure as above yields another equation called the \textbf{dual $\boldsymbol{n}$-gon equation}.

\begin{figure}[h]
    \centering
    \begin{subcaptionblock}{.4\textwidth}
        \centering
        \includegraphics[width=.9\linewidth]{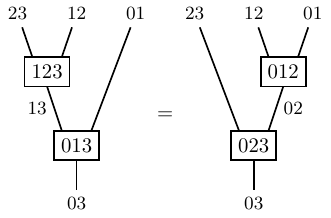}
        \caption*{4-gon equation associated to $\Delta^3$}
    \end{subcaptionblock}
    \begin{subcaptionblock}{.4\textwidth}
        \centering
        \includegraphics[width=.9\linewidth]{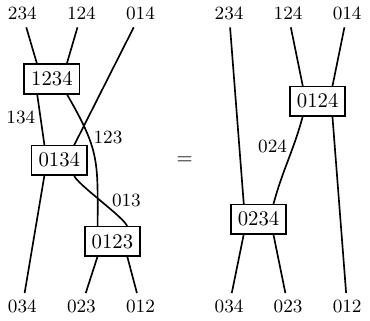}
        \caption*{5-gon (pentagon) equation associated to $\Delta^4$}
    \end{subcaptionblock}
    \caption{Diagrammatic pictures of 4- and 5-gon equations.
    The box labeled with a simplex $p$ represents the map $T(p)$, and each edge labeled with a simplex $s$ represents the set $X(s)$. 
    Read from bottom to top. Reading from top to bottom yields the dual 4- and dual 5-gon equations.}
    \label{fig:graphical45goneq}
\end{figure}

\begin{ex}
    Polygon equations for lower orders are as follows:
    \begin{itemize}
        \item A solution of the (dual) $3$-gon equation is a projector, 
        \[
        P\colon X\to X \quad \text{such that} \quad PP=P.
        \]
        \item A solution of the $4$-gon equation is a coassociative coproduct,
        \[
        \Delta\colon X\to X\times X \quad \text{such that} \quad (\Delta\times \mathrm{id}_X)\Delta=(\mathrm{id}_X\times\Delta)\Delta,
        \]
        and a solution of the dual $4$-gon equation is an associative product,
        \[
        M\colon X\times X\to X \quad \text{such that} \quad M(M\times \mathrm{id}_X)=M(\mathrm{id}_X\times M).
        \]
        \item A solution of the 5-gon equation is
        \[
        T\colon X\times X\to X\times X \quad \text{such that} \quad T_{12}T_{13}T_{23}=T_{23}T_{12},
        \]
         where $T_{ij}$ acts on the $i$-th and $j$-th factors of $X^{\times 3}$ as $T$ and on the rest as the identity. A solution of the dual $5$-gon equation is
         \[
         S\colon X\times X\to X\times X \quad \text{such that} \quad S_{23}S_{13}S_{12}=S_{12}S_{23}.
         \]
         
         Many examples of set-theoretic solutions of the pentagon equation are known. See, for example, \cite{mazzotta2025set} and the references therein.
         Furthermore, when $X$ is a finite set, bijective solutions over $X$ are completely classified in \cite{colazzo2024bijective}. See also \cite{colazzo2020set, castelli2026commutative}. 
    \end{itemize}

    For the general structure of the $n$-gon equation for higher $n$, see, for example, \cite{muller2024structure}. There are also known families of non-constant\footnote{Here, a non-constant solution means a solution in which the maps $T$ appearing in the polygon equation are allowed to be different.} solutions of polygon equations~\cite{dimakis2021grassmannian, wan2024matrix}.
    In general, examples of (set-theoretic) solutions of the (dual) $n$-gon equations are still rare for $n\geq 6$.
    
\end{ex}

\begin{ex}\label{ex:sol_commmonoid}
    In \cite{mihalache2025constructing}, we showed that for a commutative monoid $X$, the map $T\colon X^{\times\lfloor\frac{n-1}{2}\rfloor}\to X^{\times\lfloor\frac{n}{2}\rfloor}$ defined by
\begin{equation}\label{eq:Tcommutativemonoid}
    \left\{\;
    \begin{aligned}
        T(x_1,\ldots,x_k) &\coloneqq (x_1,\,x_1x_2,\,x_2x_3,\,\ldots,\,x_{k-2}x_{k-1},\,x_{k-1}x_k) && \text{if $n=2k+1$,}\\
        T(x_1,\ldots,x_{k-1}) &\coloneqq (x_1,\,x_1x_2,\,x_2x_3,\,\ldots,\,x_{k-2}x_{k-1},\,x_{k-1}) && \text{if $n=2k$}
    \end{aligned}  
    \right.
\end{equation}
is a solution of the $n$-gon equation,\footnote{For $n\leq6$, commutativity is not necessary.} and the map $S\colon X^{\times\lfloor\frac{n}{2}\rfloor}\to X^{\times\lfloor\frac{n-1}{2}\rfloor}$ defined by
\begin{equation}\label{eq:Scommutativemonoid}
    \left\{\;
    \begin{aligned}
        S(x_1,\ldots,x_k) &\coloneqq (x_1x_2,\,x_2x_3,\,\ldots,\,x_{k-2}x_{k-1},\,x_{k-1}x_k,\,x_k) && \text{if $n=2k+1$,}\\
        S(x_1,\ldots,x_{k}) &\coloneqq (x_1x_2,\,x_2x_3,\,\ldots,\,x_{k-2}x_{k-1},\,x_{k-1}x_k) && \text{if $n=2k$}
    \end{aligned}  
    \right.
\end{equation}
is a solution of the dual $n$-gon equation.\footnote{For $n\leq5$, commutativity is not necessary.}
\end{ex}

For more detailed treatments of solutions of polygon equations, see \cite{muller2024structure}.

In what follows, we are interested in the case where the subset $T$ in \eqref{eq:subset} is obtained from a solution of a (dual) polygon equation.

\begin{defi}\label{def:polygon_coloring}
    For set-theoretic solutions $T$ and $S$ of the $n$-gon and dual $n$-gon equations, respectively, we define subsets \eqref{eq:subset} of $X^{\times (n-1)}$ as
    \begin{align*}
        T &= \{\, (x_0, x_1, \ldots ,x_{n-2}) \subset X^{\times (n-1)} \mid T(x_1, x_3, \ldots x_{2\lfloor\frac{n-1}{2}\rfloor-1}) = (x_0, x_2, \ldots ,x_{2\lfloor\frac{n}{2}\rfloor-2}) \,\}, \\
        S &= \{\, (x_0, x_1, \ldots ,x_{n-2}) \subset X^{\times (n-1)} \mid S(x_0, x_2, \ldots ,x_{2\lfloor\frac{n}{2}\rfloor-2})=(x_1, x_3, \ldots x_{2\lfloor\frac{n-1}{2}\rfloor-1}) \,\}.
    \end{align*}
    (By abuse of notation, we use the same symbol to denote both a solution of a (dual) polygon equation and the corresponding subset.) Accordingly, the associated semi-simplicial set, chain complex, and cohomology groups are denoted by $\operatorname{Col}_T$, $C_{\bullet}(T;A)$, and $H^*(T;A)$, respectively, as in Section~\ref{sec:polygoncohomology}.
\end{defi}

\begin{ex}
Let $M\colon X \times X \to X$ be a solution of the dual 4-gon equation, i.e., an associative product.
An $M$-coloring of $\Delta^2$ or $\Delta^3$ is a coloring of its 1-simplices by elements of $X$ satisfying the following conditions:
    \begin{figure}[H]
    \centering
    \includegraphics[]{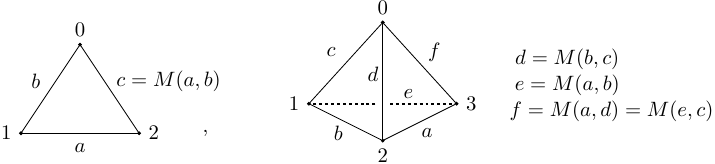}
\end{figure}
\end{ex}

\section{Structure of chain complexes}\label{sec:chain_structure}

For the rest of the paper, we assume that $T$ is a set-theoretic solution of the $n$-gon equation and $S$ is a set-theoretic solution of the dual $n$-gon equation.

\subsection{Structure of $T$-colorings}

For an $(m+1)$-simplex $\sigma \subset \Delta^N$ and an $m$-simplex $\tau \subset \sigma$, we say that
\begin{itemize}
    \item $\tau$ is positive with respect to $\sigma$ (or $\tau$ is a positive face of $\sigma$) if $\partial_i \sigma = \tau$ for even $i$,
    \item $\tau$ is negative with respect to $\sigma$ (or $\tau$ is a negative face of $\sigma$) if $\partial_i \sigma = \tau$ for odd $i$.
\end{itemize}

\begin{defi}
    An $m$-simplex $\tau\subset\Delta^N$ is called \textbf{absolutely positive} (resp. \textbf{absolutely negative}) if, for every $(m+1)$-simplex $\sigma\subset\Delta^N$ with $\tau\subset\sigma$, $\tau$ is positive (resp. negative) with respect to $\sigma$ (Figure~\ref{fig:abs_edges}).
\end{defi}

For integers $N>m$, we denote by $\mathcal{P}^N_m$ and $\mathcal{N}^N_m$ the sets of all absolutely positive and absolutely negative $m$-simplices of $\Delta^N$, respectively.

\begin{figure}[h]
    \centering
    \includegraphics[]{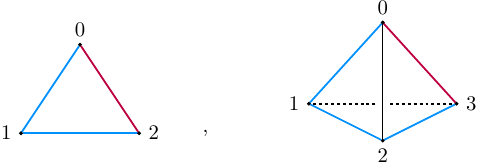}
    \caption{Example of absolutely \textcolor{blue}{positive}/\textcolor{red}{negative} 1-simplices of $\Delta^2$ and $\Delta^3$.}
    \label{fig:abs_edges}
\end{figure}

\begin{prop}\label{prop:absolutely_positeve/negative}
The sets of absolutely positive/negative simplices are given as follows:
    \begin{enumerate}[label=(\arabic*)]
        \item $\mathcal{P}^N_{2k-1}=\\ \{\, [v_0, \ldots, v_{2k-1}] \subset \Delta^N \mid 0\leq v_0<\cdots<v_{2k-1} \leq N \text{ and } v_{2j+1}=v_{2j}+1 \text{ for } 0 \leq j \leq k-1 \,\}$.\label{prop:absolutely_positeve/negative1}
        \item $\mathcal{N}^N_{2k-1}=\\ \{\, [0,v_1, \ldots, v_{2k-2},N] \subset \Delta^N\mid 0<v_1<\cdots<v_{2k-2}<N \text{ and } v_{2j}=v_{2j-1}+1 \text{ for } 1\leq j \leq k-1 \,\}$.\label{prop:absolutely_positeve/negative2}
        \item $\mathcal{P}^N_{2k}=\\ \{\, [v_0,\ldots,v_{2k-1},N] \subset \Delta^N \mid 0\leq v_0<\cdots<v_{2k-1}< N \text{ and } v_{2j+1}=v_{2j}+1 \text{ for } 0 \leq j \leq k-1 \,\}$.\label{prop:absolutely_positeve/negative3}
        \item $\mathcal{N}^N_{2k}=\\ \{\, [0,v_1,\ldots,v_{2k}] \subset \Delta^N \mid 0 < v_1<\cdots<v_{2k} \leq N \text{ and } v_{2j}=v_{2j-1}+1 \text{ for } 1 \leq j \leq k \,\}$.\label{prop:absolutely_positeve/negative4}
    \end{enumerate}
\end{prop}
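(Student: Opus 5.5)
Fix an $m$-simplex $\tau=[v_0,\ldots,v_m]\subset\Delta^N$. The first step is to describe its cofaces. Every $(m+1)$-simplex $\sigma\supset\tau$ is $\tau\cup\{w\}$ for a unique vertex $w\notin\tau$. If $w$ lies strictly between $v_{j-1}$ and $v_j$ (with the conventions $v_{-1}=-1$ and $v_{m+1}=N+1$), then $w$ is the $j$-th vertex of $\sigma$. Hence $\tau=\partial_j\sigma$, and $\tau$ is positive or negative with respect to $\sigma$ according to the parity of $j$.

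So $\tau$ is absolutely positive (resp.\ negative) exactly when, for every gap $j\in\{0,\ldots,m+1\}$ containing at least one vertex not in $\tau$, the index $j$ is even (resp.\ odd). Gap $j$ is the set of integers strictly between $v_{j-1}$ and $v_j$:
\begin{itemize}
\item gap $0$ is $\{0,\ldots,v_0-1\}$;
\item gap $m+1$ is $\{v_m+1,\ldots,N\}$;
\item an inner gap $1\le j\le m$ is nonempty iff $v_j>v_{j-1}+1$.
\end{itemize}
This reduces the proposition to a combinatorial condition: every gap with a forbidden index must be empty. The hypothesis $N>m$ guarantees that at least one gap is nonempty, but that fact is not needed for the characterization.

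The remaining work is bookkeeping in four cases, according to the parity of $m$ and the sign.

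\emph{Absolutely positive.} All odd-indexed gaps must be empty. The odd inner gaps are $j=1,3,\ldots$, and emptiness of gap $2i+1$ means $v_{2i+1}=v_{2i}+1$. If $m=2k-1$, the odd indices up to $m+1=2k$ are $1,3,\ldots,2k-1$, all inner gaps, so we get the pairing condition for $0\le i\le k-1$ and the end gaps are unrestricted. This is \ref{prop:absolutely_positeve/negative1}. If $m=2k$, the index $m+1=2k+1$ is odd as well, so we additionally need $v_m=N$. This gives \ref{prop:absolutely_positeve/negative3}, where $v_{2k}=N$ and the pairs are $v_{2i+1}=v_{2i}+1$ for $i\le k-1$.

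\emph{Absolutely negative.} All even-indexed gaps must be empty. Gap $0$ is empty iff $v_0=0$, and an even inner gap $2i$ is empty iff $v_{2i}=v_{2i-1}+1$. If $m=2k-1$, the index $m+1=2k$ is even, forcing $v_m=N$. The inner even gaps are then $2,\ldots,2k-2$, which gives \ref{prop:absolutely_positeve/negative2}. If $m=2k$, the end gap $2k+1$ is odd and unrestricted. The inner even gaps are $2,\ldots,2k$, which gives \ref{prop:absolutely_positeve/negative4}.

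Both inclusions in each case follow at once from the equivalence "$\tau$ absolutely positive/negative $\iff$ the specified gaps are empty." The only step needing care is indexing: matching the position $j$ of the inserted vertex $w$ with the face index $\partial_j$, and handling the boundary gaps $0$ and $m+1$. I expect no real obstacle beyond checking the endpoints $j=0$ and $j=m+1$ carefully.
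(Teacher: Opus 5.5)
Your proof is correct and follows the same approach as the paper: both identify the position $j$ of the appended vertex $w$ in $\tau\cup\{w\}$ with the face index $\partial_j$ and read off its parity. The only difference is presentation. You organize this through the gaps $j=0,\ldots,m+1$ and carry out all four cases explicitly, including the end gaps, whereas the paper writes out only case (1) and leaves the rest as similar.
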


\begin{proof}
    
    For \ref{prop:absolutely_positeve/negative1}, we first show that the elements of $\mathcal{P}^N_{2k-1}$ are absolutely positive.
    Consider $\tau=[v_0, \ldots, v_{2k-1}]\in \mathcal{P}^N_{2k-1}$. Any $2k$-simplex $\sigma \subset \Delta^N$ with $\tau \subset \sigma$ is given by appending a vertex $w\in\{0,1,\ldots ,N \} \backslash \{v_0, \ldots, v_{2k-1}\}$ to $\tau$. 
    Since \begin{enumerate*}[label=(\Roman*)]
        \item $v_{2j+1}=v_{2j}+1$ for $0 \leq j \leq k-1$, i.e., there is no $w$ such that $v_{2j}<w<v_{2j+1}$, and
        \item $2k-1$ is odd,
    \end{enumerate*}
     the vertex $w$ appended to $\tau$ will always lie at an even position $i$ with $\partial_i \sigma=\tau$.
    Thus, $\tau$ is absolutely positive. Next, consider a $(2k-1)$-simplex $\tau=[v_0, \ldots, v_{2k-1}] \subset \Delta^N$ such that $\tau \notin \mathcal{P}^N_{2k-1}$. Then there exists $w\in \{0,1,\ldots ,N \}$ such that $v_{2j}<w<v_{2j+1}$ for some $0 \leq j \leq k-1$.
    If we append such $w$ to $\tau$ and form the $2k$-simplex $\sigma\subset\Delta^N$, then $w$ lies at an odd position $i$ and $\partial_i\sigma=\tau$. Thus, $\tau$ is not absolutely positive.
    
    The other cases can be shown similarly.
\end{proof}

\begin{rem}
    Note that the sets $\mathcal{P}^N_m$ and $\mathcal{N}^N_m$ correspond to the lower or upper triangulations of the cyclic polytope $C([N],m)$.
    Thus the above proposition is essentially Gale's evenness criterion~\cite{gale1963neighborly}.
\end{rem}

\begin{lem}\label{lem:rank_of_chain}
    We have
    \begin{equation*}
        \abs*{\mathcal{P}^N_{2k-1}} = \binom{N+1-k}{k},\quad \abs*{\mathcal{N}^N_{2k-1}} = \binom{N-k}{k-1},\quad \abs*{\mathcal{P}^N_{2k}} = \binom{N-k}{k},\quad \abs*{\mathcal{N}^N_{2k}} = \binom{N-k}{k}.
    \end{equation*}
\end{lem}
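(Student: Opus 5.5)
By Proposition~\ref{prop:absolutely_positeve/negative}, each count reduces to a count of vertex sequences in $\{0,\ldots,N\}$ subject to a pairing condition. The plan is to collapse each forced pair of consecutive vertices $(v,v+1)$ to a single point and check that this gives a bijection onto all subsets of a smaller set of the appropriate size. Counting those subsets gives the stated binomial coefficients.

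\emph{Case $\mathcal{P}^N_{2k-1}$.} An element is a choice of $k$ disjoint "dominoes" $\{a_j,a_j+1\}$ with $a_0<a_1<\cdots<a_{k-1}$ inside $\{0,\ldots,N\}$, with the domino $\{a_j,a_j+1\}$ required to end before the next one starts, i.e.\ $a_{j+1}\ge a_j+2$. Set $b_j\coloneqq a_j-j$. Then $0\le b_0<b_1<\cdots<b_{k-1}$, and the constraint $a_{k-1}+1\le N$ becomes $b_{k-1}\le N-k$. Conversely, any strictly increasing sequence $b_0<\cdots<b_{k-1}$ in $\{0,\ldots,N-k\}$ gives back an admissible domino configuration via $a_j=b_j+j$. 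This bijection with $k$-subsets of an $(N-k+1)$-element set gives $\binom{N+1-k}{k}$.

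\emph{Case $\mathcal{P}^N_{2k}$.} The last vertex is fixed to be $N$, and the first $2k$ vertices form $k$ dominoes inside $\{0,\ldots,N-1\}$. This is the previous count with $N$ replaced by $N-1$, giving $\binom{N-k}{k}$.

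\emph{Case $\mathcal{N}^N_{2k}$.} The vertex $0$ is fixed, and the remaining vertices $v_1,\ldots,v_{2k}$ form $k$ dominoes $(v_{2j-1},v_{2j})$ inside $\{1,\ldots,N\}$. Shifting every vertex down by $1$ identifies this with the domino count on $\{0,\ldots,N-1\}$, giving $\binom{N-k}{k}$.

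\emph{Case $\mathcal{N}^N_{2k-1}$.} Both $0$ and $N$ are fixed, and the inner vertices $v_1,\ldots,v_{2k-2}$ form $k-1$ dominoes inside $\{1,\ldots,N-1\}$, a set of $N-1$ elements. Shifting down by $1$ and applying the first count with $k-1$ dominoes on $\{0,\ldots,N-2\}$ gives $\binom{(N-2)+1-(k-1)}{k-1}=\binom{N-k}{k-1}$.

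The only step that needs care is checking that the reduction $b_j=a_j-j$ is a genuine bijection, i.e.\ that strictness and the upper bound translate exactly. One also has to handle the strict inequalities $v_{2k-2}<N$ and $v_1>0$ in the negative cases, and these are exactly what restrict the dominoes to $\{1,\ldots,N-1\}$. Degenerate cases such as $k=1$ in $\mathcal{N}^N_{1}=\{[0,N]\}$ should also be confirmed to match the formula, here $\binom{N-1}{0}=1$.
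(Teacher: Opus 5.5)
Your proof is correct, but it takes a different route from the paper. The paper proves only the formula for $\abs*{\mathcal{P}^N_{2k-1}}$. It uses the recursion $\abs*{\mathcal{P}^N_{2k-1}} = \abs*{\mathcal{P}^{N-1}_{2k-1}} + \abs*{\mathcal{P}^{N-2}_{2k-3}}$, which comes from splitting according to whether $N$ is a vertex, i.e.\ whether the last pair is $(N-1,N)$. It then matches this recursion with Pascal's identity $\binom{N+1-k}{k}=\binom{N-k}{k}+\binom{N-k}{k-1}$ and leaves the other three formulas as ``similar.''

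You instead count directly, using the shift $b_j=a_j-j$ as a bijection onto $k$-subsets of $\{0,\ldots,N-k\}$. You then reduce the other three cases to this one by fixing the forced endpoints and translating. This gives the closed form at once, with no induction and no base cases to check; the paper leaves its base cases implicit. It also makes the remaining three cases explicit rather than merely ``similar.'' The recursion, by contrast, reflects how these sets nest as $N$ grows.

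Your reductions also show that $\abs*{\mathcal{P}^N_{2k}}=\abs*{\mathcal{P}^{N-1}_{2k-1}}$ and $\abs*{\mathcal{N}^N_{2k-1}}=\abs*{\mathcal{P}^{N-2}_{2k-3}}$. So the paper's recursion is just the cardinality identity $\abs*{\mathcal{P}^N_{2k-1}}=\abs*{\mathcal{P}^N_{2k}}+\abs*{\mathcal{N}^N_{2k-1}}$, splitting absolutely positive $(2k-1)$-simplices by whether they contain $N$.
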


\begin{proof}
    We prove the first one. The others can be proved similarly.
    
    We obtain recursively the relation
    \begin{equation*}
        \abs*{\mathcal{P}^N_{2k-1}} = \abs*{\mathcal{P}^{N-1}_{2k-1}} + \abs*{\mathcal{P}^{N-2}_{2k-3}}.
    \end{equation*}
    This reminds us of Pascal's identity
    \begin{equation*}
        \binom{n}{r} = \binom{n-1}{r} + \binom{n-1}{r-1}
    \end{equation*}
    for binomial coefficients. Substituting $r=k$ and $n=N-k+1$ yields the claim.
\end{proof}

We make some preparations for the proof of Proposition~\ref{prop:coloring_of_absolutely_positive/negative} below.
    Let us define a strict partial order on $\Delta^N_{n-3}$.
    For simplices $\tau, \tau^{\prime}\in\Delta^N_{n-3}$, define $\tau \lessdot \tau^{\prime}$ if there exists $\sigma\in\Delta^N_{n-2}$ such that $\tau$ and $\tau^{\prime}$ are negative and positive with respect to $\sigma$, respectively.
    Set $\tau < \tau^{\prime}$ if there is a sequence $\tau \lessdot \tau_1 \lessdot \cdots \lessdot \tau_l \lessdot \tau^{\prime}$. The next lemma shows that this is a strict partial order.
    
\begin{lem}
    $<$ is a strict partial order on $\Delta^N_{n-3}$.
\end{lem}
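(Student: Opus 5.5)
Transitivity holds by construction, since $<$ is the transitive closure of $\lessdot$. So the only thing to prove is irreflexivity: there is no cycle $\tau \lessdot \tau_1 \lessdot \cdots \lessdot \tau_l \lessdot \tau$. My plan is to find a function $\phi\colon \Delta^N_{n-3}\to \mathbb{R}$, or into some totally ordered set, that strictly increases along every relation $\tau\lessdot\tau'$. Such a function rules out cycles immediately.

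First I will describe the covering relation concretely. Suppose $\sigma=[w_0,\ldots,w_{n-2}]$, $\tau=\partial_i\sigma$ with $i$ odd, and $\tau'=\partial_j\sigma$ with $j$ even. Then $\tau$ and $\tau'$ differ in exactly one vertex: $\tau$ omits $w_i$ and $\tau'$ omits $w_j$. Their common vertices are all the $w_l$ with $l\neq i,j$. I will encode an $(n-3)$-simplex $\rho=[u_0,\ldots,u_{n-3}]$ by the alternating sum
\[
\phi(\rho)=\sum_{l=0}^{n-3}(-1)^l u_l .
\]
Equivalently, one can take a lexicographic order on the vertex sequences read from the right, with appropriate sign twists.

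The main step is to check that $\phi(\tau')-\phi(\tau)$ has a fixed sign. Assume first $i<j$. The vertices with index below $i$ sit in the same positions in both faces. Those with index above $j$ are shifted by the same amount in both, so their contributions cancel. In $\tau$, the block $w_{i+1},\ldots,w_j$ occupies positions $i,\ldots,j-1$, while in $\tau'$ the block $w_i,\ldots,w_{j-1}$ occupies the same positions. Hence
\[
\phi(\tau')-\phi(\tau)=\sum_{p=i}^{j-1}(-1)^p\,(w_p-w_{p+1}).
\]
Since $i$ is odd, the first sign $(-1)^i$ is $-1$, and the signs alternate. The sum therefore telescopes into an alternating expression $w_{i+1}-w_i-w_{i+1}+w_{i+2}+\cdots$. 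This is where I need care. Rather than telescoping, I will regroup the sum as $\sum_{p}(-1)^{p+1}(w_{p+1}-w_p)$, which is an alternating sum of positive gaps. That alone does not have an obvious sign, so the plain $\phi$ may fail.

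I expect this to be the main obstacle, and my first fallback is a weighted potential $\phi(\rho)=\sum_l (-1)^l M^{u_l}$ with $M$ large, for example $M=N+2$. With such weights, the difference is dominated by the largest vertex where $\tau$ and $\tau'$ differ. That vertex is $w_{\max(i,j)}$, and it appears with sign determined by its parity position. Concretely, $\tau'-\tau$ as formal sums reduces to a difference in which the largest changed entry is $w_j$ when $i<j$, or $w_i$ when $j<i$. Its coefficient comes from its position in the face containing it. One checks that this sign is always positive in the relevant case, because an odd removed index and an even removed index produce the required parity. The smaller entries are bounded by the geometric-series estimate $\sum_{u<w}M^{u}<M^{w}/(M-1)$.

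I will handle the cases $i<j$ and $j<i$ separately. This is the same parity bookkeeping as in the proof of Proposition~\ref{prop:absolutely_positeve/negative}, and it amounts to a "Gale evenness" lexicographic comparison. Once strict monotonicity of $\phi$ along $\lessdot$ is established, irreflexivity, and hence the lemma, follows.
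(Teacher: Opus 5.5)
Your proof is correct, provided you drop the first attempt and work with the weighted potential $\phi(\rho)=\sum_l(-1)^lM^{u_l}$ from the start. The plain alternating sum really does fail. For $\sigma=[0,1,2,5,6]$ you have $\tau=\partial_1\sigma=[0,2,5,6]\lessdot\tau'=\partial_4\sigma=[0,1,2,5]$, but $\phi(\tau)=-3>-4=\phi(\tau')$.

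In the weighted version, make the bookkeeping explicit:
\[
\phi(\tau')-\phi(\tau)=
\begin{cases}
M^{w_j}-M^{w_i}+2\sum_{q=i+1}^{j-1}(-1)^qM^{w_q}, & i<j,\\
M^{w_i}-M^{w_j}-2\sum_{q=j+1}^{i-1}(-1)^qM^{w_q}, & j<i.
\end{cases}
\]
So $w_i$ and $w_j$ are not the only contributors. Every vertex strictly between them shifts by one position, flips sign, and contributes $\pm 2M^{w_q}$. The leading term is $+M^{w_{\max(i,j)}}$, as you claim. Your bound $\sum_{u<w}M^u<M^w/(M-1)$ must absorb this factor $2$, so you need $2\sum_{u<w}M^u<M^w$. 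That holds for $M\geq 3$, and $M=N+2\geq 3$ because $\lessdot$ is empty unless $N\geq n-2\geq 1$.

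The paper uses the same overall strategy: embed $\lessdot$ into a strict total order, so that no cycle can exist. Its order is the alternating lexicographic order, which compares the two faces at the first position where they differ, namely position $\min(i,j)$. There the check is a single inequality $w_{\min(i,j)}<w_{\min(i,j)+1}$ plus the parity of that position, with no estimate needed.

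Your potential realizes a different linear extension: it compares at the largest changed vertex $w_{\max(i,j)}$. That is why you must control the whole shifted block between the two removed vertices, hence the geometric-series bound. In return you get an explicit integer-valued height function.

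You are also right that the substantive point is acyclicity, i.e.\ irreflexivity of the transitive closure. The paper gets this from the same comparison with $<_{\mathrm{alt}}$, even though it calls irreflexivity immediate from the construction.
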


\begin{proof}
    We relate $<$ to the alternating lexicographical order $<_{\mathrm{alt}}$. Recall that for $\tau=[v_0,\ldots,v_{n-3}], \tau'=[v'_0,\ldots,v'_{n-3}]\in\Delta^N_{n-3}$ and the first index $r$ for which $v_r\ne v'_r$, $\tau<_{\mathrm{alt}}\tau'$  if either
    \begin{equation*}
        \text{$r$ is even and $v_r<v'_r$, or $r$ is odd and $v_r>v'_r$.}
    \end{equation*}
    
    Now assume that $\tau\lessdot\tau'$. Then there is $\sigma=[s_0,\ldots,s_{n-2}]\in\Delta^N_{n-2}$ such that $\tau=\partial_i\sigma$ for odd $i$ and $\tau'=\partial_j\sigma$ for even $j$. We show that $\tau<_{\mathrm{alt}}\tau'$.
    
    If $i<j$, then $i$ is the first place in which $\tau=\partial_i\sigma$ and $\tau'=\partial_j\sigma$ differ.
    The $i$-th position of $\tau$ is $s_{i+1}$, while the $i$-th position of $\tau'$ is $s_i$. Since $s_i<s_{i+1}$ and $i$ is odd, we have $\tau<_{\mathrm{alt}}\tau'$.
    If $i>j$, then $j$ is the first place in which $\tau$ and $\tau'$ differ. The $j$-th position of $\tau$ is $s_{j}$, while the $j$-th position of $\tau'$ is $s_{j+1}$. Since $s_j<s_{j+1}$ and $j$ is even, we have $\tau<_{\mathrm{alt}}\tau'$ as well. Therefore, the elementary relation $\tau\lessdot\tau'$ always implies $\tau<_{\mathrm{alt}}\tau'$.

    If there is a sequence $\tau\lessdot \tau_1 \lessdot \cdots \lessdot \tau_l \lessdot \tau^{\prime}$, then we have the corresponding sequence $\tau <_{\mathrm{alt}} \tau_1 <_{\mathrm{alt}} \cdots <_{\mathrm{alt}} \tau_l <_{\mathrm{alt}} \tau^{\prime}$. In particular $\tau^{\prime}\nless\tau$, and hence $<$ is asymmetric.

    The irreflexivity and transitivity of $<$ are immediate from the construction.
\end{proof}                                                                                     

 We also use the following lemma:
\begin{lem}\label{lem:absnegative<abspositive}
    Every absolutely positive $(N-2)$-simplex of $\Delta^N$ is greater than every absolutely negative $(N-2)$-simplex of $\Delta^N$ in the order $<$.
\end{lem}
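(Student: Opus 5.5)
The plan is to make the statement completely explicit by indexing $(N-2)$-simplices by their two missing vertices, and then to exhibit a chain of length two between any absolutely negative and any absolutely positive one. Here $n-3=N-2$, so the elementary relation $\lessdot$ is witnessed by the facets $\partial_c\Delta^N$ ($0\le c\le N$). Write $\tau_{\{a,b\}}$ ($a<b$) for the $(N-2)$-simplex obtained from $\Delta^N$ by deleting $a$ and $b$. The facets containing $\tau_{\{a,b\}}$ are exactly $\partial_a\Delta^N$ and $\partial_b\Delta^N$.

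First I will compute signs. Fix a facet $\partial_c\Delta^N$ and a vertex $x\neq c$. Then $\tau_{\{c,x\}}=\partial_i(\partial_c\Delta^N)$, where $i=x$ if $x<c$ and $i=x-1$ if $x>c$. Hence $\tau_{\{c,x\}}$ is
\begin{itemize}
\item positive with respect to $\partial_c\Delta^N$ iff either $x<c$ with $x$ even, or $x>c$ with $x$ odd;
\item negative otherwise.
\end{itemize}
Applying this with $c=a,x=b$ and with $c=b,x=a$ gives the following characterization, which is consistent with Proposition~\ref{prop:absolutely_positeve/negative}\ref{prop:absolutely_positeve/negative3},\ref{prop:absolutely_positeve/negative4}:
\begin{itemize}
\item $\tau_{\{a,b\}}$ is absolutely positive iff $a$ is even and $b$ is odd;
\item $\tau_{\{a,b\}}$ is absolutely negative iff $a$ is odd and $b$ is even.
\end{itemize}

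Now let $\tau_{\{a,b\}}$ be absolutely negative ($a$ odd, $b$ even) and $\tau_{\{a',b'\}}$ absolutely positive ($a'$ even, $b'$ odd). Parity forces $a\neq a'$, $b\neq b'$, $a\ne b'$ and $a'\ne b$. I will split into two cases.

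\emph{Case $b'>a$.} The chain is $\tau_{\{a,b\}}\lessdot\tau_{\{a,b'\}}\lessdot\tau_{\{a',b'\}}$.
\begin{itemize}
\item The first step uses the facet $\partial_a\Delta^N$. Since $b>a$ is even, $\tau_{\{a,b\}}$ is negative; since $b'>a$ is odd, $\tau_{\{a,b'\}}$ is positive.
\item The second step uses the facet $\partial_{b'}\Delta^N$. Since $a<b'$ is odd, $\tau_{\{a,b'\}}$ is negative; since $a'<b'$ is even, $\tau_{\{a',b'\}}$ is positive.
\end{itemize}

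\emph{Case $b'<a$.} Then $a'<b'<a<b$, and the chain is $\tau_{\{a,b\}}\lessdot\tau_{\{a',b\}}\lessdot\tau_{\{a',b'\}}$.
\begin{itemize}
\item The first step uses the facet $\partial_b\Delta^N$. Since $a<b$ is odd, $\tau_{\{a,b\}}$ is negative; since $a'<b$ is even, $\tau_{\{a',b\}}$ is positive.
\item The second step uses the facet $\partial_{a'}\Delta^N$. Since $b>a'$ is even, $\tau_{\{a',b\}}$ is negative; since $b'>a'$ is odd, $\tau_{\{a',b'\}}$ is positive.
\end{itemize}

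In either case $\tau_{\{a,b\}}<\tau_{\{a',b'\}}$, which proves the lemma. The only delicate point is the index bookkeeping in the sign computation, namely the shift $i=x-1$ when $x>c$. Once that rule is fixed, the two-step chains follow directly from the parity constraints.
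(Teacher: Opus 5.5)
Your strategy is the paper's: index $(N-2)$-simplices by their two deleted vertices, read off absolute positivity or negativity from parities, and connect $\tau_{\{a,b\}}$ to $\tau_{\{a',b'\}}$ through $\tau_{\{a,b'\}}$ or $\tau_{\{a',b\}}$. These are exactly the paper's intermediate simplices $\tau''$. Your sign rule and both two-step chains are correct.

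However, the sentence ``parity forces $a\ne b'$ and $a'\ne b$'' is false: $a$ and $b'$ are both odd, and $a'$ and $b$ are both even. As a result, your dichotomy $b'>a$ versus $b'<a$ omits the case $b'=a$. This case does occur. For $N=2$, the only absolutely negative vertex is $[0]=\tau_{\{1,2\}}$ and the only absolutely positive one is $[2]=\tau_{\{0,1\}}$, so $a=b'=1$ and neither of your cases applies. The same happens for $[0,3]=\tau_{\{1,2\}}$ and $[2,3]=\tau_{\{0,1\}}$ in $\Delta^3$.

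The paper treats this case, together with $a'=b$, separately and first. There the two simplices lie in a common facet, here $\partial_a\Delta^N$. Your own sign rule makes $\tau_{\{a,b\}}$ negative ($b>a$, $b$ even) and $\tau_{\{a',a\}}$ positive ($a'<a$, $a'$ even). Hence $\tau_{\{a,b\}}\lessdot\tau_{\{a',b'\}}$ directly.

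The fix takes one line. Either add this direct case, or weaken your second case to $b'\le a$. Then $a'<b'\le a<b$ still gives $a'\ne b$, so $\tau_{\{a',b\}}$ is defined. The four sign checks in that chain use only $a<b$, $a'<b$, $a'<b'$ and the parities, so they still hold.

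The other false claim, $a'\ne b$, is never actually used. In your first case $\tau_{\{a,b'\}}$ is defined because $b'>a$, and that chain works even when $a'=b$.
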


\begin{proof}
    Let $\tau$ be an absolutely negative $(N-2)$-simplex and $\tau'$ be an absolutely positive $(N-2)$-simplex.
    By Proposition~\ref{prop:absolutely_positeve/negative}, $\tau$ is of the form $\tau=[0,\ldots,N]\backslash\{i<j\}$ with odd $i$ and even $j$, and $\tau'$ is of the form $\tau'=[0,\ldots,N]\backslash \{k<l\}$ with even $k$ and odd $l$.

    If $\{i,j\}\cap\{k,l\}\ne\emptyset$ (equivalently $i=l$ or $j=k$), then there is an $(N-1)$-simplex $\sigma$ that contains both $\tau$ and $\tau'$. Explicitly, $\sigma=[0,\ldots,N]\backslash\{i\}$ if $i=l$, and $\sigma=[0,\ldots,N]\backslash\{j\}$ if $j=k$. In particular, $\tau\lessdot\tau'$.

    If $\{i,j\}\cap\{k,l\}=\emptyset$, there are two cases. In the case $k<j$, set $\tau''\coloneqq[0,\ldots,N]\backslash\{k,j\}$. Then we can check $\tau\lessdot\tau''\lessdot\tau'$. In the case $k>j$, set  $\tau''\coloneqq[0,\ldots,N]\backslash\{i,l\}$. Then we can check $\tau\lessdot\tau''\lessdot\tau'$.
\end{proof}

Let $T$ be a set-theoretic solution of the $n$-gon equation and $S$ a set-theoretic solution of the dual $n$-gon equation.
The following proposition shows that $T$- and $S$-colorings are determined by colors of the absolutely negative and positive simplices, respectively.

\begin{prop}\label{prop:coloring_of_absolutely_positive/negative}
    For $N>n-3$, the maps
    \begin{align*}
        \operatorname{Col}_T(\Delta^N) \longrightarrow \operatorname{Map}(\mathcal{N}_{n-3}^N, X) \quad\text{and}\quad \operatorname{Col}_S(\Delta^N) \longrightarrow \operatorname{Map}(\mathcal{P}_{n-3}^N, X)
    \end{align*}
    induced by $\mathcal{N}_{n-3}^N\subset \Delta^N_{n-3}$ and $\mathcal{P}_{n-3}^N\subset \Delta^N_{n-3}$, respectively, are bijective. 
    
    In other words, a $T$-coloring $c \colon \Delta^N_{n-3} \to X$ (resp. $S$-coloring $c \colon \Delta^N_{n-3} \to X$) is determined by colors of the absolutely negative (resp. absolutely positive) $(n-3)$-simplices.
\end{prop}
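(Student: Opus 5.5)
The proof is by induction on $N$. I treat $T$; the statement for $S$ follows by the same argument with the roles of positive and negative faces exchanged, since in the dual equation the inputs sit on the even faces. In both cases injectivity is the easy half, and surjectivity is reduced to the single case $N=n-1$, where it is exactly the $n$-gon equation.

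\emph{Step 1: the case $N=n-2$.} Here there is one $(n-2)$-simplex, $\sigma=\Delta^{n-2}$. Every $(n-3)$-simplex is a face of $\sigma$ only, so the absolutely negative ones are the odd faces $\partial_{2i+1}\sigma$. A $T$-coloring is any assignment of the odd faces, with the even faces then fixed by $T$. So the map is a bijection.

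\emph{Step 2: the case $N=n-1$.} Each $(n-3)$-simplex of $\Delta^{n-1}$ lies in exactly two facets. It is therefore either an internal wire of the $D^+$ composition, an internal wire of the $D^-$ composition, or a free wire shared by one facet in $D^+$ and one in $D^-$. By Proposition~\ref{prop:absolutely_positeve/negative} (together with Lemma~\ref{lem:absnegative<abspositive} for the orientation bookkeeping), the free inputs of both compositions are exactly the absolutely negative simplices $\mathcal{N}^{n-1}_{n-3}$, and the free outputs are exactly $\mathcal{P}^{n-1}_{n-3}$.

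Fix colours on $\mathcal{N}^{n-1}_{n-3}$. Evaluating the $D^+$ composition determines the colours of its internal wires and of the outputs. Evaluating the $D^-$ composition determines its internal wires and gives a second set of output colours. The $n$-gon equation says the two sets of outputs coincide. Hence there is exactly one coloring satisfying the constraint on every facet, which is the bijection for $N=n-1$.

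\emph{Step 3: general $N$, injectivity.} Let $\tau$ be an $(n-3)$-simplex that is not absolutely negative. Then there is some $\sigma\in\Delta^N_{n-2}$ having $\tau$ as a positive face. Its colour $c(\tau)$ is a component of $T$ applied to the colours of the negative faces of $\sigma$, and these faces are $\lessdot\tau$. The order $<$ is a strict partial order on a finite set, so it is well-founded. Induction along $<$ therefore shows that $c$ is determined by its values on the minimal-type elements, i.e.\ on $\mathcal{N}^N_{n-3}$. This gives injectivity.

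\emph{Step 4: general $N$, surjectivity (the main obstacle).} Given $f\colon\mathcal{N}^N_{n-3}\to X$, I define $c$ by recursion along a linear extension of $<$; for instance $<_{\mathrm{alt}}$ works, by the proof of the lemma that $<$ is a strict partial order. For $\tau$ not absolutely negative, choose any $\sigma$ having $\tau$ as a positive face and set $c(\tau)$ to be the appropriate component of $T$ applied to the negative faces of $\sigma$. The difficulty is showing that this is independent of $\sigma$, and that afterwards every constraint holds.

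For two choices $\sigma,\sigma'$, the simplices share the $(n-3)$-face $\tau$, so $\rho=\sigma\cup\sigma'$ is an $(n-1)$-simplex. I then apply Step 2 to $\rho$ with the colours already assigned to the absolutely negative faces of $\rho$ (absolutely negative relative to $\rho\cong\Delta^{n-1}$). In the unique coloring of $\rho$, both the $\sigma$-constraint and the $\sigma'$-constraint hold, so they produce the same value at $\tau$. The same local argument shows that every $(n-2)$-simplex constraint is satisfied.

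The delicate point is verifying that the relevant faces of $\rho$ have already been coloured and are consistent, i.e.\ that they precede $\tau$ in the chosen linear order. I expect to handle this by inducting on $\tau$ along $<_{\mathrm{alt}}$ and checking via the Gale-type description in Proposition~\ref{prop:absolutely_positeve/negative} that the negative faces of $\rho$ are $<_{\mathrm{alt}}$-smaller. If that check fails, the fallback is an induction on $N$: first restrict to $\partial_N\Delta^N$, then extend the coloring one $(n-1)$-simplex at a time, using Step 2 as the extension lemma.
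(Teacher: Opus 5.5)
Your proposal is correct and is essentially the paper's proof: well-founded recursion along $<$ (you use its linear extension $<_{\mathrm{alt}}$), with independence of the choice of $\sigma$ obtained from the unique $(n-1)$-simplex $\rho=\sigma\cup\sigma'$ and the $n$-gon equation, which you simply make explicit as your Step~2. The ``delicate point'' you flag is settled as in the paper: $\tau$ is positive in both facets of $\rho$ containing it, hence absolutely positive in $\rho$, so Lemma~\ref{lem:absnegative<abspositive} applied to $\rho\cong\Delta^{n-1}$ puts every absolutely negative face of $\rho$ below $\tau$ (and the internal wires feeding $\tau$ lie below it via chains of $\lessdot$); no fallback is needed, and that lemma belongs in Step~4 rather than Step~2.
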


\begin{proof}

The proof essentially follows \cite{korepanov2016cohomologies}. 
Let $T\colon X^{\times\lfloor\frac{n-1}{2}\rfloor}\to X^{\times\lfloor\frac{n}{2}\rfloor}$ be a solution of the $n$-gon equation, and write it as
\begin{equation*}
    T(x_1, \cdots x_{\lfloor\frac{n-1}{2}\rfloor}) = (T_1(x_1, \cdots x_{\lfloor\frac{n-1}{2}\rfloor}), \cdots, T_{\lfloor\frac{n}{2}\rfloor}(x_1, \cdots x_{\lfloor\frac{n-1}{2}\rfloor})),
\end{equation*}
where $T_i\colon X^{\times\lfloor\frac{n-1}{2}\rfloor}\to X$.

    Given a lower subset\footnote{A subset $A$ such that whenever $\tau<\tau^{\prime}$ and $\tau^{\prime}\in A$, one has $\tau\in A$.} $A\subset \Delta^N_{n-3}$, let $\operatorname{Col}_T(A,X)\subset \operatorname{Map}(A,X)$ denote the set of partial $T$-colorings, i.e., maps $g\colon A \to X$ such that 
    \[
    g(\partial_{2i}\sigma)=T_{i+1}\bigl(g(\partial_1\sigma),g(\partial_3\sigma),\ldots,g(\partial_{2\lfloor\frac{n-1}{2}\rfloor-1}\sigma)\bigr)
    \]
    whenever $\sigma\in\Delta^N_{n-2}$ and $\partial_{2i}\sigma\in A$.
    The expression on the right-hand side is always defined because every negative face of
    $\sigma$ is smaller than
    $\partial_{2i}\sigma$, and $A$ is lower.
    In particular,
    \[
    \operatorname{Col}_T(\Delta^N_{n-3},X)=\operatorname{Col}_T(\Delta^N).
    \]

    Given $f\in \operatorname{Map}(\mathcal{N}_{n-3}^N, X)$, let us construct a $T$-coloring extending $f$.
    For $\tau \in \Delta^N_{n-3}$, we set $(\Delta^N_{n-3})_{<\tau} \coloneq \{\,\tau^{\prime}\in\Delta^N_{n-3}\,|\,\tau^{\prime} < \tau\,\}$, and then define 
    \[
    \Phi(\tau)\colon \operatorname{Col}((\Delta^N_{n-3})_{<\tau},X) \to X
    \] as follows.
    First, note that $(\Delta^N_{n-3})_{<\tau} = \emptyset$ if and only if $\tau \in \mathcal{N}^N_{n-3}$.
    Thus, for $\tau \in \mathcal{N}^N_{n-3}$, $\operatorname{Col}_T((\Delta^N_{n-3})_{<\tau}, X)=\{*\}$, and so we define 
    \[
    \Phi(\tau)(*)\coloneqq f(\tau).
    \]
    For $\tau \in \Delta^N_{n-3}\backslash \mathcal{N}^N_{n-3}$ and $g \in \operatorname{Col}_T((\Delta^N_{n-3})_{<\tau}, X)$, we take $\sigma \in \Delta^{N}_{n-2}$ such that $\tau$ is positive with respect to $\sigma$.
    Since all the $(n-3)$-simplices negative with respect to this $\sigma$ belong to $(\Delta^N_{n-3})_{<\tau}$, we can apply the solution $T$ to get a color of $\tau$, and then define $\Phi(\tau)(g)$ as this color. Formally, if $\tau = \partial_{2i}\sigma$ for some $i\geq 0$, then 
    \[
    \Phi(\tau)(g)\coloneq T_{i+1}(g(\partial_1\sigma), g(\partial_3\sigma), \ldots,g(\partial_{2\lfloor\frac{n-1}{2}\rfloor-1}\sigma)).
    \]

    In order for $\Phi(\tau)(g)$ to be well-defined, we need to show that this does not depend on the chosen $\sigma$.
    Let $\sigma^{\prime}$ be another $(n-2)$-simplex with respect to which $\tau$ is positive. There is a unique $(n-1)$-simplex $\rho$ that contains both $\sigma$ and $\sigma'$. Then $\tau$ is an absolutely positive $(n-3)$-simplex of $\rho\cong\Delta^{n-1}$. So, by Lemma~\ref{lem:absnegative<abspositive}, all absolutely negative $(n-3)$-simplices belong to $(\Delta^N_{n-3})_{<\tau}$. Since $T$ is a solution of the $n$-gon equation, $\sigma$ and $\sigma'$ occur on opposite sides of the $n$-gon equation for $\rho$, and $g$ is a partial $T$-coloring on $(\Delta^N_{n-3})_{<\tau}$, two colors of $\tau$ induced by $\sigma$ and $\sigma^{\prime}$ coincide.

    Now, using recursion and induction, there is a unique $h\in\operatorname{Col}_T(\Delta^N)$ such that for $\tau\in\Delta^N_{n-3}$, $h(\tau)=\Phi(\tau)(h|_{(\Delta^N_{n-3})_{< \tau}})$.
    In particular, $h(\tau)=f(\tau)$ for $\tau\in\mathcal{N}^N_{n-3}$, thus extending $f\in\operatorname{Map}(\mathcal{N}^N_{n-3},X)$.
    
\end{proof}

The set $\Delta^N_{n-3}$ admits a linear order using reverse lexicographical order, i.e.,
for $\tau=[v_0,\ldots,v_{n-3}], \tau^{\prime}=[v_0^{\prime},\ldots,v_{n-3}^{\prime}]\in\Delta^N_{n-3}$, $\tau<_{\text{rex}} \tau^{\prime}$ if and only if there exist $0\leq j\leq n-3$ such that $v_i=v_i^{\prime}$ for $i<j$ and $v_j > v_{j}^{\prime}$.
For the rest of the paper, we fix linear orders on $\mathcal{P}^N_{n-3}$ and $\mathcal{N}^N_{n-3}$ by restricting the reverse lexicographical order of $\Delta^N_{n-3}$.
The set $\{ f\colon \mathcal{N}_{n-3}^N\to X \}$ is then canonically identified with $X^{\times \abs{\mathcal{N}^N_{n-3}}}=\{\,(x_1, x_2, \ldots, x_{\abs{\mathcal{N}^N_{n-3}}})\mid x_i\in X\,\}$, where $x_i$ is the color of the $i$'th absolutely positive/negative $(n-3)$-simplex of $\Delta^N$.  

Let $T$ be a solution of the $n$-gon equation, and $S$ a solution of the dual $n$-gon equation.
Then, for $N\geq n-3$, the chain groups have the form
\begin{equation*}
    C_N(T) = \mathbb{Z}\langle (x_1, x_2, \ldots, x_{\abs{\mathcal{N}^N_{n-3}}})\mid x_i\in X \rangle, \qquad C_N(S) = \mathbb{Z}\langle (x_1, x_2, \ldots, x_{\abs{\mathcal{P}^N_{n-3}}})\mid x_i\in X \rangle.
\end{equation*}

\section{Examples}\label{sec:examples}

\subsection{$H_{n-3}$, $H^{n-3}$, and the structure group}

Let $T\colon X^{\times\lfloor\frac{n-1}{2}\rfloor}\to X^{\times\lfloor\frac{n}{2}\rfloor}$ be a solution of the $n$-gon equation, and write it as
\begin{equation*}
    T(x_1, \cdots x_{\lfloor\frac{n-1}{2}\rfloor}) = (T_1(x_1, \cdots x_{\lfloor\frac{n-1}{2}\rfloor}), \cdots, T_{\lfloor\frac{n}{2}\rfloor}(x_1, \cdots x_{\lfloor\frac{n-1}{2}\rfloor})),
\end{equation*}
where $T_i\colon X^{\times\lfloor\frac{n-1}{2}\rfloor}\to X$.
Then the $(n-3)$-th and $(n-2)$-th chain groups are 
\begin{equation*}
    C_{n-3}(T)=\mathbb{Z}\langle (x)\mid x\in X\rangle\quad\text{and}\quad 
    C_{n-2}(T)=\mathbb{Z}\langle (x_1,\ldots,x_{{\lfloor\frac{n-1}{2}\rfloor}})\mid x_i\in X\rangle,
\end{equation*}
respectively, and the boundary map $\partial_{n-2} \colon C_{n-2}(T) \to C_{n-3}(T)$  is
\begin{equation*}
 \partial_{n-2}(x_1, x_2, \ldots, x_{{\lfloor\frac{n-1}{2}\rfloor}}) = \sum_{i=1}^{{\lfloor\frac{n}{2}\rfloor}} T_i(x_1, \cdots ,x_{{\lfloor\frac{n-1}{2}\rfloor}}) - \sum_{i=1}^{{\lfloor\frac{n-1}{2}\rfloor}} x_i.
\end{equation*}
Thus, if we define the \textbf{structure group} $G_T$ of $T$ by 
\begin{equation*}
    G_T \coloneqq \langle x\in X \mid x_1\cdots x_{{\lfloor\frac{n-1}{2}\rfloor}} = T_1(x_1,\cdots,x_{{\lfloor\frac{n-1}{2}\rfloor}}) \cdots T_{\lfloor \frac{n}{2} \rfloor}(x_1,\cdots,x_{{\lfloor\frac{n-1}{2}\rfloor}})\rangle,
\end{equation*}
then
\begin{equation*}
    H_{n-3}(T) \cong (G_T)_{\mathrm{ab}}\quad\text{and}\quad H^{n-3}(T) \cong \operatorname{Hom}(G_T,\mathbb{Z}).
\end{equation*}

\subsection{Cohomology of the (dual) $n$-gon equations for small $n$}\label{subsec:cohomologyforsmalln}
\subsection*{Cohomology of the $4$-gon equation}
Recall that a set-theoretic solution of the 4-gon equation is a map 
\begin{equation*}
    \Delta\colon X\longrightarrow X\times X
\end{equation*}
satisfying $(\Delta\times \operatorname{id}_X)\circ \Delta = (\operatorname{id}_X\times \Delta)\circ \Delta$. Let us write $\Delta(x)=(\Delta_1(x),\Delta_2(x))$ where $\Delta_1,\Delta_2\colon X \to X$.

For each $N \geq 2$, there is only one absolutely negative simplex $\mathcal{N}^N_1 = \{[0,N]\}$, 
and thus the chain complex is given by 
\begin{equation*}
    C_N(\Delta) = \mathbb{Z} \langle (x) \mid x\in X \rangle 
\end{equation*}
for $N\geq 1$.
Here, each basis elemet $(x)$ of $C_N(\Delta)$ corresponds to the $\Delta$-coloring of $\Delta^N$ defined by 
\[
c_x\colon \mathcal{N}^N_1\longrightarrow X; \qquad [0,N]\longmapsto x.
\]

Now, let us compute the boundary map $\partial_{N} \colon C_N(\Delta) \to C_{N-1}(\Delta)$ for $N\geq 2$.

The coloring $d_i^N (c_x)$ is determined by a color of the absolutely negative edge $[0,N-1]$ of $\Delta^{N-1}$, and we calculate
\begin{align*}
    d_i^N(c_x)([0,N-1]) = c_x(\delta^i([0,N-1])) =
    \begin{cases}
        c_x([1,N]) & \text{if $i=0$,}\\
        c_x([0,N])=x & \text{if $0<i<N$,}\\
        c_x([0,N-1]) & \text{if $i=N$.}
    \end{cases}
\end{align*}
Since $[1,N]=\partial_0 [0,1,N]$, the $\Delta$-coloring rule gives $c_x([1,N])=\Delta_1(x)$, and thus $d_0^N((x))=(\Delta_1(x))$.
Likewise, the $\Delta$-coloring rule for $[0,N-1,N]$ gives $d_N^N((x))=(\Delta_2(x))$.
The boundary map for $N \geq 2$ is therefore
\begin{equation*}
    \partial_N (x)=
    \begin{cases}
        (\Delta_1(x))-(\Delta_2(x)) & \text{if $N$ is odd,} \\
        (\Delta_1(x))-(x)+(\Delta_2(x)) & \text{if $N$ is even.} 
    \end{cases}
\end{equation*}
Now, if we take the dual and compute the cohomology, this results in
\begin{align*}
    H^{1}(\Delta)=\{\,\phi\colon X\to \mathbb{Z}\mid \phi(x)=\phi(\Delta_1(x))+\phi(\Delta_2(x))\,\}\quad\text{and}\quad H^{N}(\Delta)=0 \quad \text{for $N \ne 1$}.
\end{align*}

\subsection*{Cohomology of the dual $4$-gon equation}
A set-theoretic solution of the dual 4-gon equation is a map 
\begin{equation*}
    M\colon X\times X\to X
\end{equation*}
satisfying $M\circ (M\times \operatorname{id}_X) = M\circ(\operatorname{id}_X\times M)$, that is, $(X,M)$ is a semigroup.
For simplicity, we write $x y\coloneqq M(x,y)$.

The absolutely positive $1$-simplices are given by 
\begin{equation*}
    \mathcal{P}^N_1=\{[N-1,N]<[N-2,N-1]<\cdots<[1,2]<[0,1]\},
\end{equation*}
where the linear order was given by reverse lexicographical order.
The chain complex is thus
\begin{equation*}
    C_N(M)=\mathbb{Z} \langle (x_1,x_2,\ldots,x_N)\mid x_i\in X\rangle,
\end{equation*}
where each generator $x=(x_1,\ldots,x_N)$ corresponds to the $M$-coloring $c_x$ of $\Delta^N$ defined by $\mathcal{P}^N_1\ni[j,j+1]\mapsto x_{N-j}\in X$. The coloring $d_i^N(c_x)$ colors the absolutely positive edge $[j,j+1]$ of $\Delta^{N-1}$ ($0\leq j \leq N-2$) with
\begin{equation*}
    d_i^N(c_x)([j,j+1]) = c_x(\delta^i[j,j+1]) =
    \begin{cases}
        c_x([j+1,j+2]) = x_{N-j-1} & \text{if $ i < j+1$,} \\
        c_x([j,j+2]) & \text{if $i=j+1$,} \\
        c_x([j,j+1]) = x_{N-j} & \text{if $i>j+1$.} \\
    \end{cases}
\end{equation*}
Since $[j,j+2] = \partial_1 [j, j+1, j+2]$, the $M$-coloring rule gives $c_x([j,j+2])=x_{N-j-1}x_{N-j}$.
Thus we have
\begin{equation*}
    d_i^N(x_1,x_2,\ldots, x_N) = 
    \begin{cases}
    (x_1, \ldots , x_{N-1}) & \text{if $i=0$},\\
    (x_1,\ldots, x_{N-i-1},x_{N-i}x_{N-i+1},x_{N-i+2},\ldots ,x_{N}) & \text{if $0 < i < N$},\\
    (x_2, \ldots , x_N)& \text{if $i=N$},
    \end{cases}
\end{equation*}
and the boundary map is
\begin{equation*}
    \partial_N(x_1,\ldots,x_N) = (-1)^N\left((x_2,\ldots,x_N) + \sum_{i=1}^{N-1}(-1)^i(x_1,\ldots,x_ix_{i+1},\ldots, x_N) +(-1)^N(x_1,\ldots,x_{N-1}) \right)
\end{equation*}
if $N\geq2$, and $0$ otherwise. 

Therefore, the chain complex coincides, up to sign, with the usual bar chain complex of the semigroup $(X,M)$ in degrees $N\geq2$.
Also, the semi-simplicial set $\operatorname{Col}_M$ is isomorphic to $BX=B(*,X,*)$, the bar complex, as a semi-simplicial set in degrees $N\geq 2$.

\subsection*{Cohomology of the $5$-gon equation}
A set-theoretic solution of the 5-gon equation is a map 
\begin{equation*}
    T\colon X\times X\to X\times X
\end{equation*}
satisfying $T_{12}T_{13}T_{23}=T_{23}T_{12}$.
For simplicity, we write $T(x,y)=(x\circ y,x\cdot y)$, where $ \circ, \cdot \colon X\times X\to X$.
For $N\geq 2$, the chain complex is given by 
\begin{equation*}
    C_{N}(T)=\mathbb{Z}\langle (x_1,\ldots ,x_{N-1})\mid x_i\in X \rangle,
\end{equation*}
where each generator $x=(x_1,\ldots,x_{N-1})$ corresponds to the $T$-coloring $c_x$ of $\Delta^N$ defined by $\mathcal{N}^N_2\ni[0,j,j+1]\mapsto x_{N-j
}\in X$. For $1\leq j \leq N-2$, the coloring of the absolutely negative 2-simplex $[0,j,j+1]$ of $\Delta^{N-1}$ is
\begin{align*}
d_i^N(c_x)([0,j,j+1])=
    \begin{cases}
        c_x([1,j+1,j+2]) & \text{if $i=0$},\\
        c_x([0,j+1,j+2]) = x_{N-j-1} & \text{if $0<i<j+1$},\\
        c_x([0,j,j+2]) = x_{N-j-1} \cdot x_{N-j} & \text{if $i=j+1$},\\
        c_x([0,j,j+1]) = x_{N-j} & \text{if $j+1<i$}.\\
    \end{cases}
\end{align*}
In order to determine the coloring $c_x([1,j+1,j+2])$, consider the 3-simplices $\{\,[0,1,m,m+1]\,\}_{2 \leq m \leq j+1}$.
Notice that $[1,j+1,j+2] = \partial_0 [0,1,j+1,j+2]$ and $\partial_3 [0,1,m,m+1]=\partial_2[0,1,m-1,m]$.
Thus, starting from the 3-simplex $[0,1,2,3]$, we can apply the map $T$ consecutively to get the coloring of $[1,j+1,j+2]$, which results in $c_x([1,j+1,j+2])=A_{N-j-1}$, where $A_k=x_k \circ (x_{k+1}\cdot x_{k+2}\cdot\ldots\cdot \cdot x_{N-1})$ and $A_{N-2}=x_{N-2}\circ x_{N-1}$.
Note that the product $x\cdot y$ is always associative.

Thus, the face map is given by
\begin{align*}
    d_i^N(x_1,\ldots, x_{N-1})=
    \begin{cases}
        (A_1, \ldots, A_{N-2}) & \text{if $i=0$},\\
        (x_1, \ldots ,x_{N-2}) & \text{if $i=1$},\\
        (x_1,\ldots, x_{N-i-1},\,x_{N-i}\cdot x_{N-i+1}\,,x_{N-i+2},\ldots ,x_{N-1}) & \text{if $1 < i < N$},\\
        (x_2, \dots ,x_{N-1}) & \text{if $i = N$},\\
    \end{cases}
\end{align*}
and the boundary map $\partial_{N}$ is given by
\begin{equation*}
  \begin{split}
    \partial_{N}(x_1,x_2,\ldots x_{N-1})
      = {} & (-1)^{N}(x_2,\ldots,x_{N-1})+\sum_{k=1}^{N-2}(-1)^{N+k}(x_1,\ldots,x_{k-1},x_k\cdot x_{k+1},x_{k+2},\ldots, x_{N-1}) \\
           & - (x_1,\ldots,x_{N-2}) + (A_1,\ldots,A_{N-2}).
  \end{split}
\end{equation*}

Note that this semi-simplicial set $\operatorname{Col}_T$ appears as a special case of 2-Segal semi-simple sets in \cite[Chapter 3.7]{dyckerhoff2019higher}, where $\operatorname{Col}_T$ is called the nerve of $(X, T)$.

\begin{ex}
    Let $X$ be a group. Consider the solution $T:X\times X\to X\times X$ of the pentagon equation given by Eq.~\eqref{eq:Tcommutativemonoid}: $T(x,y) \coloneqq (x,xy)$.
    In this case, the face map is
    \begin{align*}
        d_i^N(x_1,\ldots, x_{N-1})=
        \begin{cases}
            (x_1, \ldots ,x_{N-2}) & \text{if $i=0,1$},\\
            (x_1,\ldots, x_{N-i-1},\,x_{N-i}x_{N-i+1},\,x_{N-i+2},\ldots ,x_{N-1}) & \text{if $1 < i < N$},\\
            (x_2, \dots ,x_{N-1}) & \text{if $i = N$},\\
        \end{cases}
    \end{align*}
    and hence, the boundary map is
    \begin{equation*}
        \partial_{N}(x_1,x_2,\ldots x_{N-1})
        = (-1)^{N}\left((x_2,\ldots,x_{N-1})+\sum_{k=1}^{N-2}(-1)^k(x_1,\ldots,x_{k-1},x_k x_{k+1},x_{k+2},\ldots, x_{N-1})\right).
\end{equation*}
So, if we define a new semi-simplicial set $(\tilde{C}_N,\tilde{d}_N)$ by
\begin{equation*}
    \tilde{C}_N \coloneqq (\operatorname{Col}_T)_{N+2},\quad \tilde{d}^N_i \coloneqq d_{i+2}^{N+2}
\end{equation*}
i.e., the two-fold d{\'e}calage of $\operatorname{Col}_T$ (see for example \cite{danny2012decalage}), then it is again isomorphic to $EX=B(*,X,X)$, the two-sided bar complex, as a semi-simplicial set.
\end{ex}

\begin{ex}
    Let $\mathbb{Z}_{m}$ be the cyclic group of order $m$ and consider the solution $T\colon \mathbb{Z}_{m} \times \mathbb{Z}_{m} \to \mathbb{Z}_{m} \times \mathbb{Z}_{m}$ of the pentagon equation given by $T(x,y)\coloneq(0,x+y)$. In this case, the face map is
    \begin{align*}
        d_i^N(x_1,\ldots, x_{N-1})=
        \begin{cases}
            (0,\ldots,0) & \text{if $i=0$},\\
            (x_1, \ldots ,x_{N-2}) & \text{if $i=1$},\\
            (x_1,\ldots, x_{N-i-1},\,x_{N-i}+x_{N-i+1},\,x_{N-i+2},\ldots ,x_{N-1}) & \text{if $1 < i < N$},\\
            (x_2, \dots ,x_{N-1}) & \text{if $i = N$}.\\
        \end{cases}
    \end{align*}
    Then the homology and cohomology of the semi-simplicial set are
    \begin{equation*}
    H_{N}(T;\mathbb{Z}) \cong 
        \begin{cases}
            0 & \text{if $N=0,1$},\\
            \mathbb{Z}\oplus\mathbb{Z}_m & \text{if $N=2$},\\
            0 & \text{if $N\ge 3$ is odd},\\
            \mathbb{Z}_m & \text{if $N\ge 4$ is even,}
        \end{cases}\qquad
    H^{N}(T;\mathbb{Z}) \cong 
        \begin{cases}
            0 & \text{if $N=0,1$},\\
            \mathbb{Z} & \text{if $N=2$},\\
            \mathbb{Z}_m & \text{if $N\ge3$ is odd},\\
            0 & \text{if $N=1$ or $N\ge4$ is even}.
        \end{cases}
    \end{equation*}
    Indeed,
    \begin{equation*}
        \partial_3(a,b) = (0)-(a)+(a+b)-(b),
    \end{equation*}
    so $H_2(T)=\mathbb{Z}[\mathbb{Z}_m]/\operatorname{Im}(\partial_3)\cong\mathbb{Z}\oplus\mathbb{Z}_m$. For $N\ge3$, consider the subcomplex $E_\bullet$ of $C_\bullet(T)$ generated by the zero tuples. The quotient $C_\bullet(T)/E_\bullet$ identifies, up to a degree shift and signs, with the bar complex modulo the constant zero tuples. Hence, $H_N(T)\cong H_{N-1}(\mathbb{Z}_m;\mathbb{Z})$ $(N\ge 3)$. The cohomology then follows from the universal coefficient theorem.
\end{ex}

\section{Some properties of polygon cohomology}\label{sec:properties}

\subsection{Cohomology for a bijective solution}
Let $T\colon X^{\times k}\to X^{\times k}$ be a bijective map. Then $T$ is a solution of the $(2k+1)$-gon equation if and only if the inverse map $\overline{T}$ is a solution of the dual $(2k+1)$-gon equation.

\begin{prop}\label{prop:cohomologybijective}
    For a bijective solution $T$ of the $(2k+1)$-gon equation, 
    the semi-simplicial sets $\operatorname{Col}_T$ and $\operatorname{Col}_{\overline{T}}$ coincide.
\end{prop}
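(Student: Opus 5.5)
The plan is to show that the two subsets of $X^{\times 2k}$ attached to $T$ and $\overline{T}$ by Definition~\ref{def:polygon_coloring} are literally the same subset. Once that is done, Definition~\ref{def:T-coloring} gives the same set of colorings in each degree, and nothing else needs checking. Here $n=2k+1$, so $\lfloor\frac{n-1}{2}\rfloor=\lfloor\frac{n}{2}\rfloor=k$ and $n-1=2k$. The even-indexed entries of a $2k$-tuple are $x_0,x_2,\ldots,x_{2k-2}$, and the odd-indexed entries are $x_1,x_3,\ldots,x_{2k-1}$.

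First, by Definition~\ref{def:polygon_coloring} the subset attached to the solution $T$ of the $(2k+1)$-gon equation is
\[
\{(x_0,\ldots,x_{2k-1}) \mid T(x_1,x_3,\ldots,x_{2k-1})=(x_0,x_2,\ldots,x_{2k-2})\}.
\]
Second, $\overline{T}$ is a solution of the dual $(2k+1)$-gon equation, by the remark preceding the proposition. So its attached subset is
\[
\{(x_0,\ldots,x_{2k-1}) \mid \overline{T}(x_0,x_2,\ldots,x_{2k-2})=(x_1,x_3,\ldots,x_{2k-1})\}.
\]
Since $T$ is a bijection with inverse $\overline{T}$, we have $T(a)=b$ if and only if $\overline{T}(b)=a$. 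Hence the two subsets coincide.

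Third, the semi-simplicial set $\operatorname{Col}_T$ of Section~\ref{sec:polygoncohomology} depends only on the subset of $X^{\times(n-1)}$. In degrees $N\ge n-3$ its $N$-simplices are the maps $\Delta^N_{n-3}\to X$ satisfying the membership condition on every $(n-2)$-simplex. In lower degrees they form a point. The face maps are restrictions along $\delta^i_N$ and make no reference to $T$. Therefore $\operatorname{Col}_T=\operatorname{Col}_{\overline{T}}$ degreewise, and they have the same face maps.

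There is no real obstacle. The only point needing care is the bookkeeping of input and output positions. For a solution of the $n$-gon equation the inputs sit on the odd faces, while for a solution of the dual equation the inputs sit on the even faces. We must check that inverting $T$ exactly exchanges these roles, which holds because both tuples have length $k$ when $n$ is odd.
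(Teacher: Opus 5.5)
Your proof is correct and follows the paper's argument: the paper likewise notes that a $T$-coloring is the same thing as a $\overline{T}$-coloring and that the face maps do not depend on $T$ or $\overline{T}$. You make the first point explicit by checking that the two subsets of $X^{\times 2k}$ coincide, which is exactly where $\lfloor\frac{n-1}{2}\rfloor=\lfloor\frac{n}{2}\rfloor=k$ is used.
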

\begin{proof}
    It is obvious that a $T$-coloring of $\Delta^N$ is also a $\overline{T}$-coloring, and vice versa. Thus, $\operatorname{Col}_{T}(\Delta^N)=\operatorname{Col}_{\overline{T}}(\Delta^N)$.
    Since the definitions of the face maps are independent of $T$ and $\overline{T}$, the result follows.
\end{proof}

\subsection{Vanishing of (co)homology}\label{subsec:vanishingcohomology}

Let $X$ be a set, $T^{(m)}\colon X^{\times\lfloor\frac{m-1}{2}\rfloor}\to X^{\times\lfloor\frac{m}{2}\rfloor}$ be a solution of the $m$-gon equation, and $S^{(m)}\colon X^{\times\lfloor\frac{m}{2}\rfloor}\to X^{\times\lfloor\frac{m-1}{2}\rfloor}$ be a solution of the dual $m$-gon equation. In \cite[Corollary~4.2]{mihalache2025constructing}, we showed that the maps $\Delta \circ_{r} S^{(m)}$, $T^{(2k+1)}\circ_{l}\Delta$, and $S^{(2k)}\circ_{l}\Delta$ defined by
\begin{align*}
    \Delta \circ_{r} S^{(m)} (x_1, x_2, \ldots, x_{\lfloor\frac{m}{2}\rfloor}) &\coloneqq (x_1, S^{(m)}(x_1, x_2, \ldots x_{\lfloor{\frac{m}{2}\rfloor}})),\\
    T^{(2k+1)} \circ_{l} \Delta (x_1, x_2, \ldots, x_k) &\coloneqq (T^{(2k+1)}(x_1, x_2, \ldots ,x_k), x_k),\\
    S^{(2k)} \circ_{l} \Delta(x_1,x_2,\ldots,x_k) &\coloneqq (S^{(2k)}(x_1, x_2, \ldots ,x_k), x_k)
\end{align*}
are solutions of the $(m+1)$-gon, $(2k+2)$-gon, and dual $(2k+1)$-gon equations, respectively.

\begin{prop}\label{prop:cohomologystacking}
The following hold:
\begin{enumerate}[label=(\arabic*)]
    \item If there exists an element $e \in X$ such that $S^{(m)}(x, e, \ldots , e)=(x, e, \ldots, e)$ for all $x\in X$, then the $N$-th polygon (co)homology of $\Delta \circ_{r} S^{(m)}$ vanishes for all $N \geq m-1$.\label{prop:cohomologystacking1}
    \item If there exists an element $e \in X$ such that $T^{(2k+1)}(e,\ldots,e,x)=(e,\ldots,e,x)$ for all $x\in X$, then the $N$-th polygon (co)homology of $T^{(2k+1)}\circ_{l}\Delta$ vanishes for all $N \geq 2k$.\label{prop:cohomologystacking2}
    \item If there exists an element $e\in X$ such that $S^{(2k)}(e,\ldots,e,x)=(e,\ldots,e,x)$ for all $x\in X$, then the $N$-th polygon (co)homology of $S^{(2k)} \circ_{l} \Delta$ vanishes for all $N\geq 2k-1$.\label{prop:cohomologystacking3}
\end{enumerate}
\end{prop}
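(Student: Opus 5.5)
We plan to construct an explicit contracting homotopy, or more precisely an extra degeneracy, on the semi-simplicial set $\operatorname{Col}$ in the relevant range of degrees. The structural input is that the stacked solutions $\Delta\circ_r S^{(m)}$, $T^{(2k+1)}\circ_l\Delta$ and $S^{(2k)}\circ_l\Delta$ copy one of their inputs verbatim to one of their outputs. Hence in any coloring of $\Delta^N$ a whole family of $(n-3)$-simplices carry the same color. We will show that these colorings are ``cones'' on a vertex: for case~\ref{prop:cohomologystacking1} the new vertex is added at the end, and for cases~\ref{prop:cohomologystacking2} and \ref{prop:cohomologystacking3} it is added at the beginning. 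The element $e$ will serve to fill the new cone faces.

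\textbf{Step 1: the cone map.} Treat case~\ref{prop:cohomologystacking1} in detail; the others are mirror images. Using Proposition~\ref{prop:coloring_of_absolutely_positive/negative}, a coloring $c\in\operatorname{Col}(\Delta^N)$ is determined by its values on absolutely negative (resp. positive) $(n-3)$-simplices. Define $s(c)\in\operatorname{Col}(\Delta^{N+1})$ by adjoining a vertex $N+1$. Simplices not containing $N+1$ keep the colors of $c$. Simplices containing $N+1$ are colored by $e$, or by the color of the face obtained by deleting or replacing the extra vertex, in the pattern forced by the identity component of $\Delta\circ_r S^{(m)}$. The hypothesis $S^{(m)}(x,e,\ldots,e)=(x,e,\ldots,e)$ is exactly what makes every $(n-2)$-simplex containing $N+1$ satisfy the coloring rule. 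Concretely, we specify $s(c)$ on the absolutely negative or positive simplices of $\Delta^{N+1}$ and check that the unique extension restricts to $c$ on $\partial_{N+1}\Delta^{N+1}$.

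\textbf{Step 2: simplicial identities.} We verify $d_{N+1}s=\mathrm{id}$ and $d_i s = s d_i$ for $0\le i\le N$; for the front cone in cases~\ref{prop:cohomologystacking2} and \ref{prop:cohomologystacking3}, the analogous identities are $d_0s=\mathrm{id}$ and $d_{i+1}s=sd_i$. These give the chain homotopy
\[
\partial s+s\partial=\pm\mathrm{id}
\]
on $C_N$ for $N$ large, since the usual extra-degeneracy computation goes through. The identity $d_is=sd_i$ follows from uniqueness of extensions: both sides are colorings agreeing on the simplices that determine them.

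\textbf{Step 3: the degree bound and the main obstacle.} The homotopy formula fails only where $C_{N-1}$ is truncated, i.e. where $\partial$ is set to $0$ or the cone of a $*$ is not a genuine coloring. This happens in degrees $n-3$ and $n-2$. For case~\ref{prop:cohomologystacking1}, $n=m+1$, so the threshold is $N\ge n-2=m-1$. For cases~\ref{prop:cohomologystacking2} and \ref{prop:cohomologystacking3}, $n=2k+2$ and $n=2k+1$ give $N\ge 2k$ and $N\ge 2k-1$ respectively.

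The delicate points are the bottom degree $N=n-2$ and the claim that $s(c)$ is a valid coloring at all. At $N=n-2$ we must check directly that $\partial s+s\partial=\pm\mathrm{id}$ still holds, using $s$ on $C_{n-3}$, which colors $\Delta^{n-2}$ by coning a single color. The validity of $s(c)$ requires matching the positions of the copied input and output across the $(n-2)$-simplices through the cone vertex, via the parity pattern of Proposition~\ref{prop:absolutely_positeve/negative}. I expect this index bookkeeping to be the main obstacle, and I would first check it for $n=4,5$ against the explicit formulas of Section~\ref{sec:examples}. Cohomology then vanishes by dualizing the homotopy, or by the universal coefficient theorem.
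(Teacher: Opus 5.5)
Your overall strategy, an explicit contracting homotopy obtained by coning off a new vertex and filling the new faces with $e$, is the paper's. However, you put the cone vertex on the wrong side in all three cases, and with your placement Steps~1--2 fail.

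Here is a counterexample. Take $m=4$, $X=(\mathbb{Z}_{\ge0},+)$, $S^{(4)}(x,y)=x+y$ and $e=0$, so $T(x,y)=(x,x+y)$.
\begin{itemize}
\item By the $5$-gon formulas of Section~\ref{sec:examples}, $\operatorname{Col}_T(\Delta^N)\cong X^{N-1}$. Here $d_0=d_1$ delete the last entry, $d_i$ for $1<i<N$ adds entries $N-i$ and $N-i+1$, and $d_N$ deletes the first entry.
\item The only absolutely negative $2$-simplex of $\Delta^{N+1}$ not in $\partial_{N+1}\Delta^{N+1}$ is $[0,N,N+1]$. So any $s$ with $d_{N+1}s=\mathrm{id}$ has the form $s(x)=(f(x),x_1,\ldots,x_{N-1})$.
\item Your identities $d_is=sd_i$ for $i=0,2,3$ on $C_3$ give $f_3(x_1,x_2)=f_2(x_1)=f_2(x_1+x_2)$ and $f_3(x_1,x_2)+x_1=f_2(x_2)$. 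This is a contradiction.
\item Even the weaker requirement $\partial_4s_3\pm s_2\partial_3=\pm\mathrm{id}$, with any signs, forces $f_2(x_1+x_2)+x_1=f_2(x_2)$ for all $x_1\neq0$ (compare basis elements). That is impossible in $\mathbb{Z}_{\ge0}$.
\end{itemize}
So your construction cannot give $H_3=0$ here, although $H_3$ does vanish.

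The missing observation is which vertex the colorings are insensitive to.
\begin{itemize}
\item \textbf{Case (1).} $\Delta\circ_r S^{(m)}$ copies its first input to its first output, so $c(\partial_0\sigma)=c(\partial_1\sigma)$. Hence the color of $[v_0,v_1,\ldots,v_{n-3}]$ does not depend on $v_0$, and the cone must go at the \emph{front}. The paper's $h_N$ adds a new vertex $0$ and copies old colors onto the new simplices through $0$ that do not contain both $1$ and $2$. It puts $e$ exactly on the simplices containing $0,1,2$. Then for $\sigma=[0,1,2,\ldots]$ the inputs of $S^{(m)}$ are $(x,e,\ldots,e)$, which is precisely where the hypothesis enters, and $d_0h=\mathrm{id}$.
\item \textbf{Cases (2) and (3).} The copied slot is the last one, so colors do not depend on the last vertex. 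The paper therefore cones at the end, with $e$ on simplices containing $N-1,N,N+1$.
\end{itemize}

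Even on the correct side, ``uniqueness of extensions'' does not give a clean extra degeneracy. In case (1) one has $d_0h=d_1h=d_2h=\mathrm{id}$, but $d_ih=hd_{i-1}$ only for $i\ge 3$. For instance, in general $d_1h\neq hd_0$, because $hd_0$ recolors $[0,1,2,\ldots]$ by $e$ while $d_1h$ does not. So Lemma~\ref{lem:contractinghomotopy} is too optimistic at $i=1,2$. The alternating sum still collapses to $\partial h+h\partial=\mathrm{id}$ for $N\ge n-2$, because the copy rule also gives $d_0=d_1$ on $\operatorname{Col}_T$. Cases (2) and (3) are the mirror image, with a sign $(-1)^{N+1}$ on $h_N$.
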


\begin{proof}
    We prove the proposition by directly constructing a contracting homotopy. 

    For \ref{prop:cohomologystacking1}, set, for simplicity, $T\coloneqq\Delta \circ_{r} S^{(m)}$ and $n\coloneqq m+1$. For each $N\geq n-3$, we define a contracting homotopy $h_N\colon C_N(T)\to C_{N+1}(T)$ as follows: For any map $c\colon\Delta_{n-3}^N\to X$ and an $(n-3)$-simplex $[v_0,v_1,\ldots,v_{n-3}]\subset \Delta^{N+1}$, set
    \begin{equation*}
        (h_N(c))([v_0,v_1,\ldots,v_{n-3}]) \coloneqq \left\{\begin{array}{ll}
            c([v_0-1,v_1-1,\ldots,v_{n-3}-1]) & \text{if $v_0\ne 0$,} \\[2pt]
            c([0,v_1-1, v_2-1, \ldots,v_{n-3}-1]) & \text{if $v_0=0, v_1\ne1$,} \\[2pt]
            c([0,1,v_2-1,\ldots,v_{n-3}-1]) & \text{if $v_0=0, v_1=1, v_2\ne2$,} \\[2pt]
            \multicolumn{1}{c}{e} & \text{if $v_0=0, v_1=1, v_2=2$.}
        \end{array}
        \right.
    \end{equation*}
    We then have the following two lemmas:
    \begin{lem}
        If $c\in \operatorname{Col}_{T}(\Delta^N)$, then $h_N(c)\in \operatorname{Col}_{T}(\Delta^{N+1})$. Hence, we obtain a well-defined map $h_N\colon C_N(T)\to C_{N+1}(T)$.
    \end{lem}

    \begin{proof}
        Let $\sigma=[v_0,\ldots,v_{n-2}]\subset\Delta^{N+1}$ be an $(n-2)$-simplex. We look at the colors of the $(n-3)$-faces $\partial_i\sigma$ by $h_N(c)$ in the following four cases:
        \begin{center}
            \begin{enumerate*}[label=(\roman*), itemjoin={\quad}]
                \item $v_0\ne0$, \label{case:i}
                \item $v_0=0, v_1\ne 1$, \label{case:ii}
                \item $v_0=0, v_1=1, v_2\ne 2$, \label{case:iii}
                \item $v_0=0, v_1=1, v_2=2$.\label{case:iv}
            \end{enumerate*}
        \end{center}
        
        In case \ref{case:i}, set the $(n-2)$-simplex $\sigma^{\prime}\coloneqq[v_0-1,v_1-1,\ldots,v_{n-2}-1]\subset\Delta^N$. By the definition of $h_N$, $(h_N(c))(\partial_i\sigma)=c(\partial_i\sigma^{\prime})$ for all $0\leq i\leq n-2$. Therefore,
        \begin{align*}
            \MoveEqLeft T((h_N(c))(\partial_1\sigma),(h_N(c))(\partial_3\sigma),\ldots,(h_N(c))(\partial_{2\lfloor\frac{n-1}{2}\rfloor-1}\sigma))\\
            &= T(c(\partial_1\sigma^{\prime}),c(\partial_3\sigma^{\prime}),\ldots,c(\partial_{2\lfloor\frac{n-1}{2}\rfloor-1}\sigma^{\prime}))\\
            &= (c(\partial_0\sigma^{\prime}),c(\partial_2\sigma^{\prime}),\ldots,c(\partial_{2\lfloor\frac{n}{2}\rfloor-2}\sigma^{\prime}))\\
            &= ((h_N(c))(\partial_0\sigma),(h_N(c))(\partial_2\sigma),\ldots,(h_N(c))(\partial_{2\lfloor\frac{n}{2}\rfloor-2}\sigma)).
        \end{align*}

        In case \ref{case:ii}, set the $(n-2)$-simplex $\sigma^{\prime}\coloneqq[0,v_1-1,\ldots,v_{n-2}-1]\subset\Delta^N$. Then the rest is the same as case \ref{case:i}.

        In case \ref{case:iii}, set the $(n-2)$-simplex $\sigma^{\prime}\coloneqq[0,1,v_2-1,\ldots,v_{n-2}-1]\subset\Delta^N$. Then the rest is the same as case \ref{case:i}.

        In case \ref{case:iv}, by the definition of $h_N$, the colors are
        \begin{align*}
            (h_N(c))(\partial_0\sigma) &= (h_N(c))(\partial_1\sigma) = (h_N(c))(\partial_2\sigma) = c([0,1,v_3-1,\ldots,v_{n-2}-1]) \eqqcolon c_0,\\
            (h_N(c))(\partial_i\sigma) &= e\quad\text{for $3\leq i\leq n-2$.}
        \end{align*}
        We then have
        \begin{align*}
            \MoveEqLeft T((h_N(c))(\partial_1\sigma),(h_N(c))(\partial_3\sigma),\ldots,(h_N(c))(\partial_{2\lfloor\frac{n-1}{2}\rfloor-1}\sigma))\\
            &= ((h_N(c))(\partial_1\sigma),S^{(m)}((h_N(c))(\partial_1\sigma),(h_N(c))(\partial_3\sigma),\ldots,(h_N(c))(\partial_{2\lfloor\frac{n-1}{2}\rfloor-1}\sigma)))\\
            &= (c_0,S^{(m)}(c_0,e,\ldots,e))\\
            &= (c_0,c_0,e,\ldots,e)\\
            &= ((h_N(c))(\partial_0\sigma),(h_N(c))(\partial_2\sigma),(h_N(c))(\partial_4\sigma),\ldots,(h_N(c))(\partial_{2\lfloor\frac{n}{2}\rfloor-2}\sigma)).
        \end{align*}
        Here, the third equality follows from the assumption $S^{(m)}(x,e,\ldots,e)=(x,e,\ldots,e)$ for all $x\in X$.

        In conclusion, we have $h_N(c)\in\operatorname{Col}_T(\Delta^{N+1})$.
    \end{proof}
        
    \begin{lem}\label{lem:contractinghomotopy}
    Let $N>n-3$.
        \begin{enumerate}[label=(\arabic*)]
            \item For any $1\leq i\leq N+1$, $d_i^{N+1}\circ h_N = h_{N-1}\circ d_{i-1}^N$ as maps $C_N(T)\to C_N(T)$.\label{lem:contractinghomotopy1}
            \item $d_0^{N+1}\circ h_N=\mathrm{id}_{C_N(T)}$.\label{lem:contractinghomotopy2}
        \end{enumerate}
    \end{lem}

    \begin{proof}
        Take a coloring $c\in \operatorname{Col}_{T}(\Delta^N)$ and an $(n-3)$-simplex $[v_0,\ldots,v_{n-3}]\subset\Delta^N$. 
        
        For \ref{lem:contractinghomotopy1}, let $k$ be such that $v_{k-1}<i\leq v_k$. We can then easily calculate that
        \begin{align*}
            \MoveEqLeft  (d_i^{N+1}(h_N(c)))([v_0,\ldots,v_{n-3}]) = (h_{N-1}(d_{i-1}^N(c)))([v_0,\ldots,v_{n-3}])\\[1pt]
            ={}& \left\{\begin{array}{ll}
            c([v_0-1,\ldots,v_{k-1}-1,v_k,\ldots,v_{n-3}]) & \text{if $v_0\ne 0$,} \\[2pt]
            c([1,v_1,\ldots, v_{n-3}]) & \text{if $v_0=0, v_1\ne1$, $i=1$,} \\[2pt]
            c([0,v_1-1,\ldots, v_{k-1}-1, v_k, \ldots,v_{n-3}]) & \text{if $v_0=0, v_1\ne1$, $i\ne1$,} \\[2pt]
            c([1,2,v_2,\ldots,v_{n-3}]) & \text{if $v_0=0, v_1=1, v_2\ne2$, $i=1$,} \\[2pt]
            c([0,2,v_2,\ldots,v_{n-3}]) & \text{if $v_0=0, v_1=1, v_2\ne2$, $i=2$,} \\[2pt]
            c([0,1,v_2-1,\ldots,v_{k-1}-1,v_k\ldots,v_{n-3}]) & \text{if $v_0=0, v_1=1, v_2\ne2$, $i\ne 1,2$,} \\[2pt]
            \multicolumn{1}{c}{e} & \text{if $v_0=0, v_1=1, v_2=2$.}
            \end{array}\right.
        \end{align*}

        For \ref{lem:contractinghomotopy2}, we have
        \begin{equation*}
            (d_0^{N+1}(h_N(c)))([v_0,\ldots,v_{n-3}]) = (h_N(c))([v_0+1,\ldots,v_{n-3}+1]) = c([v_0,\ldots,v_{n-3}]).
        \end{equation*}
\end{proof} 
    By Lemma~\ref{lem:contractinghomotopy}, we have
    \begin{align*}
        \MoveEqLeft \partial_{N+1}h_N + h_{N-1}\partial_N\\
        ={}& \sum_{i=0}^{N+1}(-1)^id_i^{N+1}\circ h_N + \sum_{i=0}^N(-1)^ih_{N-1}\circ d_i^N\\
        ={}& d_0^{N+1}\circ h_N=\mathrm{id}_{C_N(T)}
    \end{align*}
    for all $N\geq n-2$, and hence $H_N(T) = 0$ for $N\geq n-2$.

    For \ref{prop:cohomologystacking2} and \ref{prop:cohomologystacking3}, set $(T\coloneqq T^{(2k+1)}\circ_{l}\Delta, n\coloneqq 2k+2)$ or $(S\coloneqq S^{(2k)}\circ_l\Delta, n\coloneqq 2k+1)$. In both cases, a contracting homotopy $h_N\colon C_N(T)\to C_{N+1}(T)$ (or $h_N\colon C_N(S)\to C_{N+1}(S)$) ($N\geq n-3$) is given by
     \begin{equation*}
        (h_N(c))([v_0,\ldots,v_{n-3}]) \coloneqq \left\{\begin{array}{ll}
            c([v_0,\ldots,v_{n-3}]) & \text{if $v_{n-3}\ne N+1$,} \\[2pt]
            c([v_0,\ldots,v_{n-4}, N]) & \text{if $v_{n-3}=N+1, v_{n-4}\ne N$,} \\[2pt]
            c([v_0,\ldots,v_{n-5},N-1,N]) & \text{if $v_{n-3}=N+1, v_{n-4}= N, v_{n-5}\ne N-1$} \\[2pt]
            \multicolumn{1}{c}{e} & \text{if $v_{n-3}=N+1$, $v_{n-4}= N$, $v_{n-5}=N-1$}
        \end{array}
        \right.
    \end{equation*}
    for $c\colon\Delta_{n-3}^N\to X$ and $[v_0,v_1,\ldots,v_{n-3}]\subset \Delta^{N+1}$. Then the rest is similar.
\end{proof}

As an example, we apply Proposition~\ref{prop:cohomologystacking} to the solutions $T^{(n)}$ and $S^{(n)}$ of (dual) polygon equations coming from a commutative monoid in Example~\ref{ex:sol_commmonoid}.

\begin{cor}\label{cor:homologymonoid}
Let $X$ be a commutative monoid, and $T^{(n)}$ and $S^{(n)}$ be the solutions of the $n$-gon and dual $n$-gon equations given by Eq.~\eqref{eq:Tcommutativemonoid} and  \eqref{eq:Scommutativemonoid}, respectively. Then the homology groups are
\begin{equation*}
    H_N(T^{(2k+1)}) = H_N(S^{(2k+1)})=\left\{
    \begin{array}{ll}
        \mathbb{Z} & \text{if $N=2k-2$,}\\[2pt]
        0 & \text{otherwise,}
    \end{array}
        \right.\qquad
    H_N(T^{(2k+2)}) = 0\quad\text{for all $N$.}\footnotemark
\end{equation*}
\footnotetext{The homology groups of dual even-gon equations are, in general, nontrivial.}
\end{cor}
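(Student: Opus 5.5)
The plan is to combine Proposition~\ref{prop:cohomologystacking}, which handles all degrees $N\geq n-2$, with a direct computation in the single remaining degree $N=n-3$. Since $C_N=0$ for $N<n-3$, all lower homology vanishes automatically. Throughout, $e$ will be the unit $1$ of the monoid.

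\emph{Step 1: recognize each solution as a stacking.} Comparing \eqref{eq:Tcommutativemonoid} and \eqref{eq:Scommutativemonoid}, I expect the identities
\[
T^{(2k+1)}=\Delta\circ_r S^{(2k)},\qquad S^{(2k+1)}=S^{(2k)}\circ_l\Delta,\qquad T^{(2k+2)}=T^{(2k+1)}\circ_l\Delta .
\]
For instance, $(x_1,S^{(2k)}(x_1,\ldots,x_k))=(x_1,x_1x_2,\ldots,x_{k-1}x_k)$, and $(T^{(2k+1)}(x_1,\ldots,x_k),x_k)=(x_1,x_1x_2,\ldots,x_{k-1}x_k,x_k)$. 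Next I check the hypotheses of Proposition~\ref{prop:cohomologystacking} with $e=1$:
\begin{itemize}
\item $S^{(2k)}(x,1,\ldots,1)=(x,1,\ldots,1)$;
\item $S^{(2k)}(1,\ldots,1,x)=(1,\ldots,1,x)$;
\item $T^{(2k+1)}(1,\ldots,1,x)=(1,\ldots,1,x)$.
\end{itemize}
Parts \ref{prop:cohomologystacking1}, \ref{prop:cohomologystacking3} and \ref{prop:cohomologystacking2} of Proposition~\ref{prop:cohomologystacking} then give the following vanishing ranges:
\begin{itemize}
\item $H_N(T^{(2k+1)})=0$ for $N\geq 2k-1$;
\item $H_N(S^{(2k+1)})=0$ for $N\geq 2k-1$;
\item $H_N(T^{(2k+2)})=0$ for $N\geq 2k$.
\end{itemize}
In every case the vanishing holds from $N=n-2$ upward.

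\emph{Step 2: the degree $n-3$.} Here $H_{n-3}=\mathbb{Z}\langle X\rangle/\operatorname{Im}\partial_{n-2}$. The boundary of the coloring of $\Delta^{n-2}$ is, up to an overall sign, the sum of the colors of the positive faces minus the sum of the colors of the negative faces, exactly as in the structure-group computation. Substituting tuples consisting mostly of $1$'s should give the following relations.
\begin{itemize}
\item For $T^{(2k+1)}$ with $k\geq2$, the input $(x,y,1,\ldots,1)$ yields $(xy)=(1)$. Hence every $(z)$ equals $(1)$, and the quotient is cyclic generated by $(1)$.
\item For $S^{(2k+1)}$, the input $(x,y,1,\ldots,1)$ yields $(xy)=(x)$. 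Taking $x=1$ gives $(y)=(1)$, so the quotient is again cyclic.
\item For $T^{(2k+2)}$, the input $(1,\ldots,1)$ has $k+1$ outputs and $k$ inputs, which gives $(1)=0$. The input $(x,1,\ldots,1)$ then gives $(x)=0$, so $H_{2k}(T^{(2k+2)})=0$.
\end{itemize}
In the first two cases the cyclic group is $\mathbb{Z}$ rather than a finite quotient, because the augmentation $\mathbb{Z}\langle X\rangle\to\mathbb{Z}$, $(x)\mapsto1$, kills every relation: each relator has equally many positive and negative terms.

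\emph{Main obstacle.} I expect the real difficulty to be bookkeeping rather than conceptual. First, the identifications in Step~1 must match the paper's conventions for $\circ_r$ and $\circ_l$ exactly. Second, the relators in Step~2 must come with the correct signs and face counts. Third, the edge case $n=3$ (that is, $k=1$ in the odd family) needs separate attention. There $T^{(3)}=\mathrm{id}$, every relator vanishes, and one gets $H_0=\mathbb{Z}\langle X\rangle$. So the claim $H_{2k-2}=\mathbb{Z}$ should be read for $k\geq2$, or for the trivial monoid when $k=1$. I would state this restriction explicitly.
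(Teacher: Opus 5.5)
Your proposal is correct and follows the paper's proof. You use the same stacking identifications and the unit $e=1$ to apply Proposition~\ref{prop:cohomologystacking} for all $N\geq n-2$. Your bottom-degree argument (image inside the augmentation ideal with cyclic quotient generated by $(1)$, and $(1)=(x)=0$ in the even case) is equivalent to the paper's $\operatorname{Im}\partial_{2k-1}=I$ and surjectivity of $\partial_{2k}$.

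Two small slips do not affect the argument. First, the remaining degree for $T^{(2k+2)}$ is $n-3=2k-1$, not $2k$. Second, for $k=2$ the tuple $(x,y)$ gives $(xy)=(y)$ rather than $(xy)=(1)$, which still yields $(x)=(1)$ at $y=1$.

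Your caveat that the odd-gon claim needs $k\geq 2$ is correct and is implicit in the paper.
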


\begin{proof}
    First note that $T^{(2k+1)}=\Delta \circ_{r} S^{(2k)}$, $T^{(2k+2)}=T^{(2k+1)}\circ_l\Delta$, and $S^{(2k+1)}=S^{(2k)}\circ_{l}\Delta$, and also $S^{(2k)}(x,1,\ldots,1)=(x,1,\ldots,1)$, $T^{(2k+1)}(1,\ldots,1,x)=(1,\ldots,1,x)$, and $S^{(2k)}(1,\ldots,1,x)=(1,\ldots,1,x)$ for all $x\in X$, where $1$ is the identity element of $X$. So, by Proposition~\ref{prop:cohomologystacking}, all homology groups are zero except $H_{2k-2}(T^{(2k+1)})$, $H_{2k-2}(S^{(2k+1)})$, and $H_{2k-1}(T^{(2k+2)})$.

    For $H_{2k-2}(T^{(2k+1)})$ (and similarly for $H_{2k-2}(S^{(2k+1)})$), the boundary map $\partial_{2k-1}\colon C_{2k-1}(T^{(2k+1)})\to C_{2k-2}(T^{(2k+1)})$ in terms of the canonical basis is given by
    \begin{align*}
        \partial_{2k-1}(x_1,\ldots, x_k) &= x_1-x_1+x_1x_2-x_2+x_2x_3-x_3+\cdots-x_{k-1}+x_{k-1}x_k-x_k\\
        &= \sum_{i=1}^{k-1}(x_i-1)x_{i+1}.
    \end{align*}
    Thus, the image is contained in the augmentation ideal $I$. Conversely, for any $x,y\in X$,
    \begin{equation*}
        (x-1)y = \partial_{2k-1}(1\ldots, 1,x,y),
    \end{equation*}
    so that $I\subset\operatorname{Im}(\partial_{2k-1})$. Hence,
    \begin{equation*}
        H_{2k-2}(T^{(2k+1)}) = \mathbb{Z}\langle X\rangle/\operatorname{Im}(\partial_{2k-1}) = \mathbb{Z}\langle X\rangle/I \cong \mathbb{Z}.
    \end{equation*}
    
    For $H_{2k-1}(T^{(2k+2)})$, the boundary map $\partial_{2k}\colon C_{2k}(T^{(2k+2)})\to C_{2k-1}(T^{(2k+2)})$ in terms of the canonical basis is given by
    \begin{align*}
        \partial_{2k}(x_1,\ldots, x_k) &= x_1-x_1+x_1x_2-x_2+x_2x_3-x_3+\cdots-x_{k-1}+x_{k-1}x_k-x_k+x_k\\
        &= \sum_{i=1}^{k-1}x_ix_{i+1} - \sum_{i=2}^{k-1}x_i.
    \end{align*}
    In particular, if one takes $x_2=\cdots=x_k=1$ (the identity of $X$), then
    \begin{equation*}
        \partial_{2k}(x_1,1,\ldots, 1) = x_1,
    \end{equation*}
    so that $\partial_{2k}$ is surjective. Hence, $H_{2k-1}(T^{(2k+2)})=0$.
\end{proof}

\subsection{(Co)homology of stacking solutions}

Let $X$ be a set, and $S^{(n)}\colon X^{\times\lfloor\frac{n}{2}\rfloor}\to X^{\times\lfloor\frac{n-1}{2}\rfloor}$ be a solution of the dual $n$-gon equation. As in Section~\ref{subsec:vanishingcohomology}, the map $T^{(n+1)}\coloneqq \Delta\circ_{r}S^{(n)} \colon X^{\times\lfloor\frac{n}{2}\rfloor}\to X^{\times\lfloor\frac{n+1}{2}\rfloor}$ given by
\begin{equation*}
    T^{(n+1)}(x_1,\ldots,x_{\lfloor\frac{n}{2}\rfloor}) \coloneqq (x_1,S^{(n)}(x_1,\ldots, x_{\lfloor\frac{n}{2}\rfloor}))
\end{equation*}
is a solution of the $(n+1)$-gon equation. In this case, one can define maps $\{\pi_{N+2}\colon \operatorname{Col}_{T^{(n+1)}}(\Delta^{N+2})\to \operatorname{Col}_{S^{(n)}}(\Delta^N)\}$ whose linear extension form a chain map of degree $-2$, as follows: For any map $c\colon\Delta_{n-2}^{N+2}\to X$ and an $(n-3)$-simplex $[v_0,v_1,\ldots,v_{n-3}]\subset \Delta^N$, set
\begin{equation*}
    (\pi_{N+2}c)([v_0,v_1,\ldots,v_{n-3}]) \coloneqq c([0,v_0+2, v_1+2,\ldots,v_{n-3}+2]).
\end{equation*}
We then have
\begin{prop}\label{prop:chainmap}
    \begin{enumerate}[label=(\arabic*)]
        \item If $c\in \operatorname{Col}_{T^{(n+1)}}(\Delta^{N+2})$, then $\pi_{N+2}c\in \operatorname{Col}_{S^{(n)}}(\Delta^N)$.\label{prop:chainmap1}
        \item $\{\pi_{N+2}\}$ is a chain map of degree $-2$, i.e., $\partial_N^{S^{(n)}}\pi_{N+2}=\pi_{N+1}\partial_{N+2}^{T^{(n+1)}}$.\label{prop:chainmap2}
    \end{enumerate}
\end{prop}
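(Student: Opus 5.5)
The plan is a direct computation from the definitions, using one reindexing observation: $\pi_{N+2}$ restricts a coloring of $\Delta^{N+2}$ to the ``cone'' of simplices of the form $[0,w_0+2,\ldots,w_{n-3}+2]$. (I read the domain of $c$ as $\Delta^{N+2}_{n-2}$, since $T^{(n+1)}$-colorings color $(n-2)$-simplices.) First I will unpack the $T^{(n+1)}$-coloring rule. For an $(n-1)$-simplex $\sigma\subset\Delta^{N+2}$, write $x_i\coloneqq c(\partial_i\sigma)$. Since $T^{(n+1)}(y_1,\ldots)=(y_1,S^{(n)}(y_1,\ldots))$, the condition in Definition~\ref{def:polygon_coloring} says exactly two things:
\begin{equation*}
x_0=x_1 \qquad\text{and}\qquad S^{(n)}(x_1,x_3,\ldots,x_{2\lfloor\frac{n}{2}\rfloor-1})=(x_2,x_4,\ldots,x_{2\lfloor\frac{n+1}{2}\rfloor-2}).
\end{equation*}

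For \ref{prop:chainmap1}, take an $(n-2)$-simplex $\tau=[w_0,\ldots,w_{n-2}]\subset\Delta^N$ and set $\sigma\coloneqq[0,w_0+2,\ldots,w_{n-2}+2]\subset\Delta^{N+2}$.
\begin{itemize}
\item For $0\le i\le n-2$ we have $\partial_{i+1}\sigma=[0,\partial_i\tau+2]$, hence $c(\partial_{i+1}\sigma)=(\pi_{N+2}c)(\partial_i\tau)$.
\item The second half of the rule at $\sigma$ is therefore the statement that $S^{(n)}$ sends the colors of the even faces of $\tau$ to the colors of its odd faces.
\item I will check that the index ranges match: the inputs of $T^{(n+1)}$ are $\lfloor n/2\rfloor$ many, which equals the number of even faces $\partial_0\tau,\partial_2\tau,\ldots$; the outputs $x_2,x_4,\ldots$ correspond to the $\lfloor (n-1)/2\rfloor$ odd faces of $\tau$.
\end{itemize}
This is precisely the $S^{(n)}$-coloring condition for $\tau$. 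When $N=n-3$ every map is an $S^{(n)}$-coloring, so nothing needs checking there.

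For \ref{prop:chainmap2}, I will first compare face maps. Since $\delta^{i+2}_{N+2}$ fixes $0$ and acts on $v+2$ as $\delta^{i}_{N}(v)+2$, we get
\begin{equation*}
d_i^N\circ\pi_{N+2}=\pi_{N+1}\circ d_{i+2}^{N+2}\qquad(0\le i\le N).
\end{equation*}
Consequently
\begin{equation*}
\pi_{N+1}\partial_{N+2}=\pi_{N+1}(d_0-d_1)+\sum_{i=0}^N(-1)^i d_i\pi_{N+2},
\end{equation*}
and it remains to show $\pi_{N+1}d_0^{N+2}c=\pi_{N+1}d_1^{N+2}c$. Evaluated on $[v_0,\ldots,v_{n-3}]$, the two sides are $c([1,v+3])$ and $c([0,v+3])$ respectively. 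Applying the first half of the rule, $x_0=x_1$, to the $(n-1)$-simplex $\rho=[0,1,v_0+3,\ldots,v_{n-3}+3]$ gives $c(\partial_0\rho)=c(\partial_1\rho)$, which is exactly the required equality.

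The main obstacle is bookkeeping at the low-degree boundary of the complexes rather than any conceptual step. There $C_{N}(S^{(n)})=0$ for $N<n-3$, and $\partial$ is set to zero below degree $n-2$. I will define $\pi$ to be zero where its target vanishes. I will then check that for $N=n-3$ both sides of $\partial_N^{S^{(n)}}\pi_{N+2}=\pi_{N+1}\partial_{N+2}^{T^{(n+1)}}$ land in the zero group $C_{n-4}(S^{(n)})$, so the identity holds trivially there.
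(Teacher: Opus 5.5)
Your proposal is correct and follows essentially the same route as the paper. Both arguments unpack the $T^{(n+1)}$-coloring rule into $c(\partial_0\rho)=c(\partial_1\rho)$ together with the $S^{(n)}$-condition on the shifted faces, prove (1) via the cone simplex $[0,w_0+2,\ldots,w_{n-2}+2]$, and prove (2) from $d_i^N\pi_{N+2}=\pi_{N+1}d_{i+2}^{N+2}$ plus $\pi_{N+1}d_0^{N+2}=\pi_{N+1}d_1^{N+2}$, the latter obtained from the simplex $[0,1,v_0+3,\ldots,v_{n-3}+3]$. Your explicit handling of the low degrees, where $C_N(S^{(n)})=0$ and $\partial$ vanishes, is a harmless addition that the paper leaves implicit.
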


\begin{proof}
    Set, for simplicity, $S\coloneqq S^{(n)}$ and $T\coloneqq T^{(n+1)}=\Delta\circ_{r}S^{(n)}$. First, observe that since $c\in \operatorname{Col}_{T^{(n+1)}}(\Delta^{N+2})$, we have for any $(n-1)$-simplex $\tau\subset\Delta^{N+2}$
    \begin{equation*}
        T(c(\partial_1\tau), c(\partial_3\tau),\ldots, c(\partial_{2\lfloor\frac{n}{2}\rfloor-1}\tau)) = (c(\partial_0\tau), c(\partial_2\tau),\ldots, c(\partial_{2\lfloor\frac{n+1}{2}\rfloor-2}\tau)),
    \end{equation*}
    so that
    \begin{align}
        c(\partial_1\tau) &= c(\partial_0\tau),\label{eq:permit1}\\
        S(c(\partial_1\tau), c(\partial_3\tau),\ldots, c(\partial_{2\lfloor\frac{n}{2}\rfloor-1}\tau)) &= (c(\partial_2\tau), c(\partial_4\tau),\ldots, c(\partial_{2\lfloor\frac{n+1}{2}\rfloor-2}\tau)).\label{eq:permit2}
    \end{align}

    For \ref{prop:chainmap1}, let $\sigma=[v_0,\ldots,v_{n-2}]\subset\Delta^N$ be an $(n-2)$-simplex. Consider the $(n-1)$-simplex $\tau=[0, v_0+2,\ldots,v_{n-2}+2]\subset\Delta^{N+2}$.
    Now, by the definition of $\pi_{N+2}$, we have
    \begin{align*}
         (\pi_{N+2}c)(\partial_i\sigma) &=  (\pi_{N+2}c)([v_0,\ldots,\hat{v_i},\ldots,v_{n-2}])\\
         &= c([0,v_0+2,\ldots,\widehat{v_i+2},\ldots,v_{n-2}+2])\\
         &= c(\partial_{i+1}\tau),
    \end{align*}
    and so
    \begin{align*}
        \MoveEqLeft S( (\pi_{N+2}c)(\partial_0\sigma), (\pi_{N+2}c)(\partial_2\sigma),\ldots, (\pi_{N+2}c)(\partial_{2\lfloor\frac{n}{2}\rfloor-2}\sigma))\\
        &=S^{(n)}(c(\partial_1\tau), c(\partial_3\tau),\ldots, c(\partial_{2\lfloor\frac{n}{2}\rfloor-1}\tau))\\
        &= (c(\partial_2\tau), c(\partial_4\tau),\ldots, c(\partial_{2\lfloor\frac{n+1}{2}\rfloor-2}\tau)) & \text{(Eq.~\eqref{eq:permit2})}\\
        &= ( (\pi_{N+2}c)(\partial_1\sigma), (\pi_{N+2}c)(\partial_3\sigma),\ldots, (\pi_{N+2}c)(\partial_{2\lfloor\frac{n-1}{2}\rfloor-1}\sigma)).
    \end{align*}
    Hence, $\pi_{N+2}c\in \operatorname{Col}_{S^{(n)}}(\Delta^N)$.

    For \ref{prop:chainmap2}, we first show
    \begin{equation*}
        (d^S)^N_i\pi_{N+2} = \pi_{N+1} (d^T)^{N+2}_{i+2}\quad\text{for $0\leq i\leq N$}\quad\text{and}\quad \pi_{N+1} (d^T)^{N+2}_{0}=\pi_{N+1} (d^T)^{N+2}_{1}.
    \end{equation*}
    Let $c\in \operatorname{Col}_{T^{(n+1)}}(\Delta^{N+2})$, and $\sigma=[v_0,\ldots,v_{n-3}]\in\Delta^{N-1}_{n-3}$. We calculate
    \begin{align*}
        \MoveEqLeft ((d^S)^N_i\pi_{N+2}c)(\sigma) = (\pi_{N+2}c)(\delta^i\sigma)\\
        &= \begin{cases}
            c([0,v_0+2,\ldots,v_{n-3}+2]) & \text{if $i>v_{n-3}$,}\\
            c([0,v_0+2,\ldots,v_{k-1}+2,v_k+3,\ldots,v_{n-3}+3]) & \text{if $v_{k-1}<i\leq v_k$ for some $k$,}\\
            c([0,v_0+3,\ldots,v_{n-3}+3]) & \text{if $i\leq v_0$}
        \end{cases}
    \end{align*}
    for $0\le i\le N$, and
    \begin{align*}
        \MoveEqLeft (\pi_{N+1} (d^T)^{N+2}_{i+2}c)(\sigma) = c(\delta^{i+2}[0, v_0+2,\ldots,v_{n-3}+2])\\
        &= \begin{cases}
            c([0,v_0+2,\ldots,v_{n-3}+2]) & \text{if $i>v_{n-3}$,}\\
            c([0,v_0+2,\ldots,v_{k-1}+2,v_k+3,\ldots,v_{n-3}+3]) & \text{if $v_{k-1}<i\leq v_k$ for some $k$,}\\
            c([0,v_0+3,\ldots,v_{n-3}+3]) & \text{if $-1\leq i\leq v_0$,}\\
            c([1,v_0+3,\ldots,v_{n-3}+3]) & \text{if $i+2=0$}
        \end{cases}
    \end{align*}
    for $0\le i+2\le N$. Thus, $(d^S)^N_i\pi_{N+2} = \pi_{N+1} (d^T)^{N+2}_{i+2}$ for $0\leq i\leq N$. Also by Eq.~\eqref{eq:permit1} for $\tau=[0,1,v_0+3,\ldots,v_{n-3}+3]$, we obtain $\pi_{N+1} (d^T)^{N+2}_{0}=\pi_{N+1} (d^T)^{N+2}_{1}$.

    Now the proof proceeds as:
    \begin{align*}
        \pi_{N+1}\partial^T_{N+2} &= \pi_{N+1}(d^T)^{N+2}_{0}-\pi_{N+1} (d^T)^{N+2}_{1} + \sum_{i=2}^{N+2}(-1)^i\pi_{N+1}(d^T)^{N+2}_{i}\\
        &= \sum_{i=0}^{N}(-1)^i(d^S)^N_{i}\pi_{N+2} = \partial^S_N\pi_{N+2}.
    \end{align*}
\end{proof}

\begin{ex}[Classifying spaces of groups]
    Let $X$ be a group and consider, as before, the solutions $T=T^{(5)}\colon X\times X\to X\times X$ and $S=S^{(4)}\colon X\times X\to X$ of the pentagon and dual 4-gon equations given by
    \begin{equation*}
        T(x,y)\coloneqq (x,xy) \quad\text{and}\quad S(x,y)\coloneqq xy,
    \end{equation*}
    respectively. We saw in Section~\ref{subsec:cohomologyforsmalln} that the semi-simplicial sets $\operatorname{Col}_T$ and $\operatorname{Col}_S$ correspond to the two-sided bar construction $B(*,X,X)$ and $B(*,X,*)$, respectively. In this view, the map $\pi_{N+2}$ in Proposition~\ref{prop:chainmap} corresponds to the projection $(x_1,\ldots,x_{N+1})\mapsto (x_1,\ldots, x_N)$. 

    More generally, for the solutions $T^{(2k+1)}$ and $S^{(2k)}$ given by Eq.~(\ref{eq:Tcommutativemonoid}) and (\ref{eq:Scommutativemonoid}), respectively, this example, together with Corollary~\ref{cor:homologymonoid}, suggests that the chain map $\{\pi_{N+2}\colon \operatorname{Col}_{T^{(2k+1)}}(\Delta^{N+2})\to \operatorname{Col}_{S^{(2k)}}(\Delta^N)\}$ may be interpreted as a higher-dimensional analog of the above usual case.
\end{ex}

\section{Polygon equations and the higher Segal condition}\label{sec:segal}

The semi-simplicial set $\operatorname{Col}_T$ associated in Definition \ref{def:polygon_coloring} with a solution of a (dual) polygon equation satisfies an additional property called the higher Segal condition.

\subsection{Simplicial vs. semi-simplicial}

We first show that, in general, the semi-simplicial set $\operatorname{Col}_T$ does not extend to a simplicial set. This explains why we work in the framework of semi-simplicial sets.

\begin{prop}
    Let $T$ be a solution of the $n$-gon equation for $n\geq 4$ or a solution of the dual $n$-gon equation for $n\geq 5$ over a set $X$.
    Then $\operatorname{Col}_T$ admits degeneracy maps extending its semi-simplicial structure to a simplicial structure if and only if $X$ is a singleton.
\end{prop}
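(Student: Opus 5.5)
The plan is to test the simplicial identities against the degeneracies $s_j\colon(\operatorname{Col}_T)_{n-3}\to(\operatorname{Col}_T)_{n-2}$. For the converse direction (a singleton $X$ admits degeneracies), note that if $X=\{*\}$ then every $T$ is the unique map and every $(\operatorname{Col}_T)_N$ is a singleton. So $\operatorname{Col}_T$ is the terminal semi-simplicial set, which obviously extends to the terminal simplicial set. For the forward direction, assume degeneracies $s_j$ exist satisfying the simplicial identities. We will show that every $x\in X$ equals a fixed element $e$.

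\emph{Step 1: a canonical constant.} Since $n\ge 4$, we have $(\operatorname{Col}_T)_N=\{*\}$ for all $N\le n-4$, and in particular in degree $0$. Any element of $(\operatorname{Col}_T)_{n-3}=X$ obtained by applying degeneracies to the point of degree $n-4$ is an iterated degeneracy of the unique $0$-simplex. By the relations $s_is_j=s_{j+1}s_i$ $(i\le j)$, every composite of degeneracies $[m]\to[0]$ reduces to the same normal form. Hence all such elements coincide, and we call this element $e\in X$.

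\emph{Step 2: shape of $s_j(x)$.} For $x\in X=(\operatorname{Col}_T)_{n-3}$ and $0\le j\le n-3$, the identities give $d_js_j(x)=d_{j+1}s_j(x)=x$. For $i<j$ we have $d_is_j(x)=s_{j-1}d_i(x)$, and for $i>j+1$ we have $d_is_j(x)=s_jd_{i-1}(x)$; in both cases $d_i(x)\in(\operatorname{Col}_T)_{n-4}=\{*\}$, so these faces are degeneracies of the point and equal $e$ by Step~1. Consequently $s_j(x)$ is the coloring of the single $(n-2)$-simplex $\Delta^{n-2}$ whose faces $\partial_j$ and $\partial_{j+1}$ are colored $x$ and all other faces are colored $e$. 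The coloring condition of Definition~\ref{def:polygon_coloring} then gives an identity for $T$. The two faces $j,j+1$ have opposite parities, so exactly one of them is an input and the other an output.

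\emph{Step 3: compare two degeneracies.} For a solution $T$ of the $n$-gon equation, the inputs are the odd faces. Take $j=0$ and $j=1$, which is possible since $n-2\ge 2$. In both $s_0(x)$ and $s_1(x)$ the input tuple is $(x,e,\dots,e)$, with $x$ at face $1$. Hence the two outputs $T(x,e,\dots,e)$ must agree. From $s_0(x)$ the output at face $2$ is $e$, while from $s_1(x)$ it is $x$; therefore $x=e$. For a dual solution $S$, the inputs are the even faces, and we use face $2$ as the input by comparing $s_1(x)$ and $s_2(x)$. This needs faces up to $3$, i.e.\ $n-2\ge 3$, which is exactly the hypothesis $n\ge 5$. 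The same comparison at output faces $1$ and $3$ forces $x=e$. Thus $X=\{e\}$.

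The main obstacle I expect is Step~1, together with the careful bookkeeping in Step~2. One must make sure that all the ``other'' faces receive literally the same constant $e$, independently of $j$ and of the face index, since otherwise the two input tuples in Step~3 would not coincide. I would handle this by the normal-form argument for degeneracies of the unique $0$-simplex. Alongside it, I would check the low case $n=4$ directly: there $(\operatorname{Col}_T)_{0}=\{*\}$ and $T\colon X\to X\times X$, and we get $T(x)=(x,e)=(e,x)$, which already gives $x=e$.
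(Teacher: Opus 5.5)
Your proposal is correct and follows the paper's proof: you compute the faces of $s_j(x)$ from the simplicial identities, then compare $s_0,s_1$ for the $n$-gon case and $s_1,s_2$ for the dual case, using that a coloring of $\Delta^{n-2}$ is determined by its input faces. The only cosmetic difference is that you get a common value $e$ for all $s_i(*)$ from the normal form of degeneracies (equivalently, uniqueness of the map $[m]\to[0]$), whereas the paper derives $x_i=x_{i+1}$ directly by applying $d_{n-2}$ to the identity $s_is_i=s_{i+1}s_i$.
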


\begin{proof}
    Let $T$ be a solution of the (dual) $n$-gon equation over $X$.
    Assume $\operatorname{Col}_T$ is a simplicial set with the underlying semi-simplicial structure given in Section~\ref{subsec:poly_cohomology} and denote the degeneracy maps by $\{s_i^{m}\colon (\operatorname{Col}_T)_m\to (\operatorname{Col}_T)_{m+1}\}_{0 \leq i \leq m}$.
    Since $(\operatorname{Col}_T)_{n-4}=\{*\}$ and $(\operatorname{Col}_T)_{n-3}=X$, we set $x_i \coloneqq s_i^{n-4}(*)\in X$ for $0\leq i \leq n-4 $.
    From the compatibility condition of $d$ and $s$, for any $x\in X$ we have
    \begin{equation*}
        \begin{cases}
            d_is_j=s_{j-1}d_i & \text{if} \quad i < j\\
            d_is_j=\mathrm{id} & \text{if} \quad i = j,j+1\\
            d_is_j=s_{j}d_{i-1} & \text{if} \quad i > j+1
        \end{cases}
        \quad
        \Longrightarrow
        \quad
        d_i^{n-2}(s_j^{n-3}(x)) = 
        \begin{cases}
            x_{j-1} & \text{if} \quad i < j\\
            x & \text{if}\quad i = j, j+1\\
            x_{j} & \text{if}\quad i > j+1\\
        \end{cases}
    \end{equation*}
    From the relation $s_i^{n-3}s_i^{n-4}=s_{i+1}^{n-3}s_i^{n-4}$, we have 
    \begin{align*}
        x_{i}&=d^{n-2}_{n-2}s_i^{n-3}s_i^{n-4}(*)=d^{n-2}_{n-2}s_{i+1}^{n-3}s_i^{n-4}(*)=x_{i+1}, \qquad 0\leq i <n-4. 
    \end{align*}
    We denote the unique element by $x_*(=x_i \text{ for any $i$})$.
    
    Now, assume $T$ is a solution of the $n$-gon equation ($n\geq 4$).
    From the definition of $T$-colorings, $c, c^{\prime}\in (\operatorname{Col}_T)_{n-2}$ coincide if and only if $d^{n-2}_i(c)=d^{n-2}_i(c^{\prime})$ for all odd $i$.
    Since $d_i^{n-2}s_0^{n-3}(x)=d_i^{n-2}s_1^{n-3}(x)$ for all odd $i$, the two maps $s_0^{n-3}$ and $s_1^{n-3}$ coincide.
    Thus, 
    \[
    x = d_0^{n-2}s_0^{n-3}(x)=d_0^{n-2}s_1^{n-3}(x)=x_*
    \]
    for any $x\in X$, and $X$ is a singleton.

    Similarly, assume $T$ is a solution of the dual $n$-gon equation ($n\geq 5$).
    From the definition of $T$-colorings, $c, c^{\prime}\in (\operatorname{Col}_T)_{n-2}$ coincide if and only if $d^{n-2}_i(c)=d^{n-2}_i(c^{\prime})$ for all even $i$.
    Since $d_i^{n-2}s_{1}^{n-3}(x)=d_i^{n-2}s_{2}^{n-3}(x)$ for all even $i$, the two maps $s_{1}^{n-3}$ and $s_{2}^{n-3}$ coincide.
    Thus, 
    \[
    x = d_{1}^{n-2}s_{1}^{n-3}(x)=d_{1}^{n-2}s_{2}^{n-3}(x)=x_*
    \]
    for any $x\in X$, and $X$ is a singleton.
    
    Conversely, if X is a singleton, then every $(\operatorname{Col}_T)_m$ is a singleton, and the unique possible degeneracy maps make $\operatorname{Col}_T$ into the terminal simplicial set.
\end{proof}

\subsection{Relation to higher Segal semi-simplicial sets}

We investigate the relationship between polygon equations and higher Segal semi-simplicial sets. 

For any collection $\mathcal{I}$ of simplices of $\Delta^N=[0,1,\ldots,N]$, let $\langle\mathcal{I}\rangle$ denote the subcomplex generated by $\mathcal{I}$, that is,
\begin{equation*}
    \langle \mathcal{I} \rangle \coloneqq \{\,\tau\subset\Delta^N\mid \tau\subset\sigma \text{ for some } \sigma\in\mathcal{I}\,\}.
\end{equation*}
By convention, we remove the empty face from $\langle\mathcal{I}\rangle$, and regard the resulting set $\langle\mathcal{I}\rangle^{\times}$ as a poset (hence a small category) under inclusion.

For $N\geq d\geq 0$, define two posets $ \mathcal{U}(N,d)$ and  $\mathcal{L}(N,d)$ in $\Delta^N$ as
\begin{equation*}
    \mathcal{U}(N,d) \coloneqq \begin{cases}
        \langle \mathcal{P}_d^N\rangle^{\times} & \text{if $d$ is even,}\\
        \langle\mathcal{N}_d^N\rangle^{\times} & \text{if $d$ is odd,}
    \end{cases} \quad\text{and}\quad \mathcal{L}(N,d) \coloneqq \begin{cases}
        \langle \mathcal{N}_d^N\rangle^{\times} & \text{if $d$ is even,}\\
        \langle\mathcal{P}_d^N\rangle^{\times} & \text{if $d$ is odd,}
    \end{cases}
\end{equation*}
respectively. Note that these posets coincide with the upper and lower triangulations of the cyclic polytope.

Given a semi-simplicial set $Z$ and an $n$-simplex $\sigma \in \langle \mathcal{I}\rangle^{\times}$, we denote by $Z_{\sigma}$ a copy of $Z_n$.
Furthermore, for a face $\tau \subset \sigma$, there exists a canonical map $Z_{\sigma}\to Z_{\tau}$ induced by the inclusion. This forms a functor $(\langle \mathcal{I}\rangle^{\times})^{\operatorname{op}} \to \mathsf{Set}$.

\begin{defi}[\cite{dyckerhoff2026cyclic,poguntke2017higher}] 
    Let $n\geq 0$. A semi-simplicial set $Z$ is
    \begin{itemize}
        \item \textbf{upper} $\boldsymbol{n}$\textbf{-Segal} if the natural map
        \begin{equation*}
            Z_N \longrightarrow \lim_{\sigma\in\mathcal{U}(N,n)^\mathrm{op}} Z_\sigma
        \end{equation*}
        is bijective for all $N \geq n$,
        \item \textbf{lower} $\boldsymbol{n}$\text{-Segal} if the natural map
        \begin{equation*}
            Z_N \longrightarrow \lim_{\sigma\in\mathcal{L}(N,n)^\mathrm{op}} Z_\sigma
        \end{equation*}
        is bijective for all $N \geq n$,
        \item \textbf{fully} $\boldsymbol{n}$\textbf{-Segal} if it is both upper and lower $n$-Segal.
    \end{itemize}
    
\end{defi}

Let $n\text{-}\mathsf{Gon}$ denote the category of set-theoretic solutions of $n$-gon equations. Its objects are pairs $(X,T)$, where $T$ is a solution of the $n$-gon equation over a set $X$, and its morphisms $(X,T) \to (Y,U)$ are maps $\alpha\colon X \to Y$ satisfying $\alpha^{\times \lfloor \frac{n}{2} \rfloor}\circ T = U \circ \alpha^{\times \lfloor \frac{n-1}{2} \rfloor}$. Similarly, let $n\text{-}\mathsf{Gon}^*$ denote the category of set-theoretic solutions of the dual $n$-gon equation.

Let $n\text{-}\mathsf{Segal}^{<m}_{\mathcal{L}}$, $n\text{-}\mathsf{Segal}^{<m}_{\mathcal{U}}$ and $n\text{-}\mathsf{Segal}^{<m}$ denote, respectively, the categories of lower, upper, and fully $n$-Segal semi-simplicial sets satisfying $Z_N = \{*\}$ for $N < m$. The morphisms of these categories are the usual morphisms of semi-simplicial sets. 

The following proposition is a generalization of \cite[Corollary 3.7.4]{dyckerhoff2019higher}.

\begin{thm}\label{thm:polygontoSegal}
    We have the following:
    \begin{enumerate}[label=(\arabic*)]
        \item The categories $(2k+1)\text{-}\mathsf{Gon}$ and $(2k-2)\text{-}\mathsf{Segal}^{<2k-2}_{\mathcal{L}}$ are equivalent. \label{thm:polygontoSegal1}
        \item The categories $(2k+1)\text{-}\mathsf{Gon}^*$ and $(2k-2)\text{-}\mathsf{Segal}^{<2k-2}_{\mathcal{U}}$ are equivalent. \label{thm:polygontoSegal2}
        \item Let $(2k+1)\text{-}\mathsf{Gon}^{\mathrm{bij}}$ be the full subcategory of $(2k+1)\text{-}\mathsf{Gon}$ consisting of all bijective solutions of the $(2k+1)$-gon equation. Then the categories $(2k+1)\text{-}\mathsf{Gon}^{\mathrm{bij}}$ and $(2k-2)\text{-}\mathsf{Segal}^{<2k-2}$ are equivalent. \label{thm:polygontoSegal3}
        \item The categories $2k\text{-}\mathsf{Gon}$ and $(2k-3)\text{-}\mathsf{Segal}^{<2k-3}_{\mathcal{U}}$ are equivalent. \label{thm:polygontoSegal4}
        \item The categories ${2k}\text{-}\mathsf{Gon}^*$ and ${(2k-3)}\text{-}\mathsf{Segal}^{<2k-3}_{\mathcal{L}}$ are equivalent. \label{thm:polygontoSegal5}
    \end{enumerate}
\end{thm}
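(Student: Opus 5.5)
The plan is to build, in each case, an explicit pair of functors and check that they are mutually quasi-inverse, as in the proof of \cite[Corollary 3.7.4]{dyckerhoff2019higher}. I treat case \ref{thm:polygontoSegal1} in detail; the others follow by the same argument with the roles of $\mathcal{P}$ and $\mathcal{N}$ (equivalently of $\mathcal{U}$ and $\mathcal{L}$) swapped and the parity of $d=n-3$ adjusted.

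\emph{Forward functor.} Let $n=2k+1$ and $d=n-3=2k-2$, which is even. Then $\mathcal{L}(N,d)=\langle\mathcal{N}^N_{d}\rangle^{\times}$. To a solution $(X,T)$ assign $\operatorname{Col}_T$, and to a morphism $\alpha$ assign the postcomposition $c\mapsto\alpha\circ c$. This preserves $T$-colorings because $\alpha$ intertwines the two solutions, and it commutes with the face maps. By construction $(\operatorname{Col}_T)_N=\{*\}$ for $N<d$.

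\emph{Lower $d$-Segal condition.} Fix $N\ge d$. Every simplex of dimension $<d$ has a singleton set of colorings, so the limit over $\mathcal{L}(N,d)^{\mathrm{op}}$ involves only the maximal $d$-simplices, with no compatibility constraints among them. Hence
\[
\lim_{\sigma\in\mathcal{L}(N,d)^{\mathrm{op}}}(\operatorname{Col}_T)_\sigma\cong\operatorname{Map}(\mathcal{N}^N_{d},X),
\]
and the natural map is exactly the restriction map of Proposition~\ref{prop:coloring_of_absolutely_positive/negative}. That proposition shows it is bijective for $N>d$, and for $N=d$ it is trivially bijective. (For the dual solution $S$ in case \ref{thm:polygontoSegal2}, $\mathcal{P}$ appears instead, which matches $\mathcal{U}(N,d)$ for even $d$. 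In cases \ref{thm:polygontoSegal4} and \ref{thm:polygontoSegal5}, $d=2k-3$ is odd, and the parity swap in the definitions of $\mathcal{U}$ and $\mathcal{L}$ again matches $\mathcal{N}$ for $T$ and $\mathcal{P}$ for $S$.) Since $\operatorname{Col}_T$ is lower $d$-Segal, the functor lands in $(2k-2)\text{-}\mathsf{Segal}^{<2k-2}_{\mathcal{L}}$.

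\emph{Inverse functor.} Given a lower $d$-Segal $Z$ with $Z_N=\{*\}$ for $N<d$, set $X\coloneqq Z_d$. Applying the Segal condition at $N=d+1$ gives a bijection $Z_{d+1}\cong X^{\mathcal{N}^{d+1}_d}$, that is, $\lfloor\frac{n-1}{2}\rfloor$ copies of $X$ indexed by the odd faces. Composing its inverse with the even face maps $d_0,d_2,\dots$ defines
\[
T\colon X^{\times\lfloor\frac{n-1}{2}\rfloor}\to X^{\times\lfloor\frac{n}{2}\rfloor}.
\]

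\emph{Main obstacle: $T$ solves the $n$-gon equation.} I will use the Segal condition at $N=d+2=n-1$. An element $z\in Z_{n-1}$ determines colors $d_\bullet z$ on all $d$-faces of $\Delta^{n-1}$, and these form a $T$-coloring of $\Delta^{n-1}$, because each $(d+1)$-face of $z$ lies in $Z_{d+1}$ and therefore satisfies the $T$-rule. The colors on $\mathcal{N}^{n-1}_d$ can be chosen arbitrarily, since the restriction map is bijective. Now evaluate both compositions in the polygon equation on a given input. The input colors are the absolutely negative faces. Each side computes the absolutely positive faces by applying $T$ successively along the $(d+1)$-faces in $D^+$, respectively $D^-$, and in both cases it reads off the same coloring $d_\bullet z$. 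So the two sides agree. The delicate point is checking that the composition order in the definition of the $n$-gon equation matches the order $<$ used in Proposition~\ref{prop:coloring_of_absolutely_positive/negative}, and that Lemma~\ref{lem:absnegative<abspositive} guarantees the input and output faces are exactly $\mathcal{N}$ and $\mathcal{P}$. Morphisms are then sent to their component in degree $d$.

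\emph{Quasi-inverse.} Starting from $(X,T)$, the round trip returns $X$ and $T$ on the nose, since $(\operatorname{Col}_T)_{d+1}$ is literally the set of tuples in $T$. Starting from $Z$, define $Z_N\to\operatorname{Col}_{T}(\Delta^N)$ by $z\mapsto(\tau\mapsto z|_\tau)$. Its image consists of $T$-colorings because the $(d+1)$-faces of $z$ lie in $Z_{d+1}$. It is bijective because both sides are identified with $\operatorname{Map}(\mathcal{N}^N_d,X)$: the target by Proposition~\ref{prop:coloring_of_absolutely_positive/negative}, the source by the Segal condition. The map is natural in $Z$ and compatible with face maps.

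\emph{Bijective case \ref{thm:polygontoSegal3}.} If $T$ is bijective, Proposition~\ref{prop:cohomologybijective} gives $\operatorname{Col}_T=\operatorname{Col}_{\overline T}$. This semi-simplicial set is lower Segal by \ref{thm:polygontoSegal1} and upper Segal by \ref{thm:polygontoSegal2}. Conversely, if $Z$ is fully Segal, then $Z_{d+1}$ is identified both with $X^{\text{odd faces}}$ and with $X^{\text{even faces}}$, so the associated $T$ is a bijection with inverse the dual solution. Full faithfulness is inherited from \ref{thm:polygontoSegal1}.
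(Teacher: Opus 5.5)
Your proposal is correct and shares the paper's overall architecture. It uses the same functors $(X,T)\mapsto\operatorname{Col}_T$ and $Z\mapsto(Z_d,\psi\circ\phi^{-1})$, reads off the Segal property from Proposition~\ref{prop:coloring_of_absolutely_positive/negative}, and gets part~\ref{thm:polygontoSegal3} from Proposition~\ref{prop:cohomologybijective}.

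The genuine difference is in how you verify that $T_Z$ solves the polygon equation. The paper works with limits:
\begin{itemize}
\item it constructs an explicit inverse $\overline{r}_l$ of the restriction $r_l\colon\lim_{\sigma\in(\langle D^+\rangle^{\times})^{\mathrm{op}}}Z_\sigma\to\lim_{\sigma\in\mathcal{L}(2k,2k-2)^{\mathrm{op}}}Z_\sigma$ by iterating $T_Z$ facet by facet;
\item it identifies $r_u\circ\overline{r}_l$ and $s_u\circ\overline{s}_l$ with the two sides of the equation;
\item it concludes from the commutativity of \eqref{dig:2k+1_gon_diagram} together with the lower Segal bijection out of $Z_{2k}$.
\end{itemize}
You argue elementwise instead. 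You lift an input to some $z\in Z_{n-1}$ and note that its face colors form a $T_Z$-coloring $c$ of $\Delta^{n-1}$. Each side of the equation is an acyclic composite of copies of $T_Z$, so it sends $c|_{\mathcal{N}}$ to $c|_{\mathcal{P}}$.

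What each route buys: yours is shorter, needs no inverse to be constructed, and shows that only surjectivity of $Z_{n-1}\to X^{\mathcal{N}^{n-1}_{d}}$ is used. The paper's route additionally shows that the maps $Z_{2k}\to\lim_{\sigma\in(\langle D^{\pm}\rangle^{\times})^{\mathrm{op}}}Z_\sigma$ are bijections.

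Two small remarks.
\begin{itemize}
\item The ``delicate point'' you flag involves neither the order $<$ nor Lemma~\ref{lem:absnegative<abspositive}. All you need is that each side is a well-defined acyclic composite, and that $\Delta^{n-1}\setminus\{a<b\}$ is a free input of either side exactly when $a$ is odd and $b$ is even (and a free output exactly when $a$ is even and $b$ is odd). This is a direct check from the definitions, and it matches the description of $\mathcal{N}^{n-1}_{n-3}$ and $\mathcal{P}^{n-1}_{n-3}$.
\item Your natural isomorphism $Z\cong\operatorname{Col}_{T_Z}$ is the accurate form of the paper's assertion $\mathcal{F}\circ\mathcal{G}=\mathrm{id}$.
\end{itemize}
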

\begin{proof}
    Let us prove \ref{thm:polygontoSegal1}. Part \ref{thm:polygontoSegal2}, \ref{thm:polygontoSegal4}, and \ref{thm:polygontoSegal5} are similar.

    Let us define 
    \[
    \mathcal{F}\colon{(2k+1)}\text{-}\mathsf{Gon} \longrightarrow {(2k-2)}\text{-}\mathsf{Segal}^{<2k-2}_{\mathcal{L}}
    \]
    by sending the pair $(X,T)$ to the semi-simplicial set $\operatorname{Col}_T$.
    We need to show that the canonical map
    \[
    (\operatorname{Col}_T)_N\longrightarrow \lim_{\sigma\in\mathcal{L}(N,2k-2)^\mathrm{op}} (\operatorname{Col}_T)_\sigma
    \]
    is bijective for all $N<2k-2$.
    Since $(\operatorname{Col}_T)_N=\{*\}$ for $N<2k-2$, the limit $\lim_{\sigma\in\mathcal{L}(N,2k-2)^\mathrm{op}} (\operatorname{Col}_T)_\sigma$ becomes a product $\prod_{\sigma\in\mathcal{N}^N_{2k-2}}\!\!\!\operatorname{Col}_T(\Delta^{2k-2})$, and the canonical map is given by restriction
    \begin{align*}
        &\operatorname{Col}_T(\Delta^N) \ni c \longmapsto \prod_{\sigma\in\mathcal{N}^N_{2k-2}}\!\!\!c|_{\sigma} \in\prod_{\sigma\in\mathcal{N}^N_{2k-2}}\!\!\!\operatorname{Col}_T(\Delta^{2k-2})
    \end{align*}
    By Proposition~\ref{prop:coloring_of_absolutely_positive/negative}, this map is bijective, thus $\operatorname{Col}_T$ is a lower $(2k-2)$-Segal semi-simplicial set.

    Let $\alpha$ be morphism from $(X,T)$ and $(Y,U)$, i.e., $\alpha\colon X \to Y$ such that $\alpha^{\times k}\circ T=U \circ \alpha^{\times k}$.
    Then we define $\mathcal{F}(\alpha)$ by $\operatorname{Col}_T(\Delta^N)\ni c \mapsto \alpha\circ c \in \operatorname{Col}_U(\Delta^N)$ for each $N$. The fact that $\alpha\circ c$ is a $U$-coloring follows from the condition $\alpha$ satisfies.

    Conversely, let us define 
    \[
    \mathcal{G}\colon{(2k-2)}\text{-}\mathsf{Segal}^{<2k-2}_{\mathcal{L}} \longrightarrow {(2k+1)}\text{-}\mathsf{Gon}
    \]
    as follows.
    Given a lower $(2k-2)$-Segal semi-simplicial set $Z\in{(2k-2)}\text{-}\mathsf{Segal}^{<2k-2}_{\mathcal{L}}$, we have two canonical maps $\phi$ and $\psi$:

    \begin{equation}\label{dig:defTz}
        \begin{tikzcd}[column sep=large, row sep=large]
        {\displaystyle{\lim_{\sigma\in\mathcal{L}(2k-1,2k-2)^{\mathrm{op}}}Z_{\sigma}}}
        &&
        {\displaystyle{\lim_{\sigma\in\mathcal{U}(2k-1,2k-2)^{\mathrm{op}}}Z_{\sigma}}}
        \\
        & {Z_{2k-1}}
        \arrow[ur, "\psi"']
        \arrow[ul, "\phi"]
        \arrow[dashrightarrow, from=1-1, to=1-3, "T_Z"]
        \end{tikzcd}
    \end{equation}
    Now, since $Z_N=\{*\}$ for $N<2k-2$, we have
    \begin{align*}
        \lim_{\sigma\in\mathcal{L}(2k-1,2k-2)^{\mathrm{op}}}Z_{\sigma}=\prod_{\sigma\in\mathcal{N}^{2k-1}_{2k-2}}\!\!\!Z_\sigma=(Z_{2k-2})^{\times k}\quad\text{and}\quad
        \lim_{\sigma\in\mathcal{U}(2k-1,2k-2)^{\mathrm{op}}}Z_{\sigma}=\prod_{\sigma\in\mathcal{P}^{2k-1}_{2k-2}}\!\!\!Z_\sigma=(Z_{2k-2})^{\times k}.
    \end{align*}
    By the lower $(2k-2)$-Segal condition, $\phi$ is a bijection, and thus we set $T_Z\coloneqq\psi\circ\phi^{-1}\colon (Z_{2k-2})^{\times k} \to (Z_{2k-2})^{\times k}$.
    We show that this $T_Z \colon (Z_{2k-2})^{\times k} \to (Z_{2k-2})^{\times k}$ satisfies the $(2k+1)$-gon equation.
    
    Recall that in Section \ref{sec:polygon_equations}, we considered the partision
    \begin{equation*}
        D^{+} = \{\,\partial_i\Delta^{2k}\,|\,\text{$0\leq i \leq 2k$, $i$ is even}\,\},\qquad
        D^{-} = \{\,\partial_j\Delta^{2k}\,|\,\text{$0\leq j \leq 2k$, $j$ is odd}\,\}
    \end{equation*}
    of $(2k-1)$-dimensional faces of $\Delta^{2k}$.
    Consider the following commutative diagram:
        \begin{equation}\label{dig:2k+1_gon_diagram}
        \begin{tikzcd}[row sep=large, column sep=large]
        &
        {\displaystyle{\lim_{\sigma\in(\langle D^+\rangle^{\times})^{\mathrm{op}}}} Z_{\sigma}}
        \\
        {\displaystyle{
        \prod_{\sigma\in\mathcal{N}^{2k}_{2k-2}}\!\!\!Z_\sigma
        =
        \lim_{\sigma\in\mathcal{L}(2k,2k-2)^{\mathrm{op}}}}Z_{\sigma}}
        &
        {Z_{2k}}
        &
        {\displaystyle{
        \lim_{\sigma\in\mathcal{U}(2k,2k-2)^{\mathrm{op}}}}Z_{\sigma}
        =
        \prod_{\sigma\in\mathcal{P}^{2k}_{2k-2}}\!\!\!Z_\sigma}
        \\
        &
        {\displaystyle{\lim_{\sigma\in(\langle D^-\rangle^{\times})^{\mathrm{op}}}} Z_{\sigma}}
        \arrow["", from=2-2, to=1-2]
        \arrow[""', from=2-2, to=2-1]
        \arrow["", from=2-2, to=2-3]
        \arrow[""', from=2-2, to=3-2]
        \arrow["r_l"', from=1-2, to=2-1]
        \arrow["r_u", from=1-2, to=2-3]
        \arrow["s_l", from=3-2, to=2-1]
        \arrow["s_u"', from=3-2, to=2-3]
        \end{tikzcd}
        \end{equation}

    We prove that $r_l$ is bijective by constructing an inverse map $\overline{r}_l$.
    By the universal property of limits, to construct a map 
    \begin{equation*}
    \overline{r}_l\colon\prod_{\sigma\in\mathcal{N}^{2k}_{2k-2}}\!\!\!Z_\sigma \longrightarrow \lim_{\sigma\in(\langle D^+\rangle ^{\times})^{\mathrm{op}}}Z_{\sigma},
    \end{equation*}
    we just need to define a map 
    \begin{equation*}
    (\overline{r}_l)_{\tau}\colon \prod_{\sigma\in\mathcal{N}^{2k}_{2k-2}}\!\!\!Z_\sigma \longrightarrow Z_{\tau}
    \end{equation*}
    for each $\tau \in D^+$ such that the following diagram commutes whenever $\tau$ and $\tau^{\prime}$ share the common $(2k-2)$-simplex $\tau\cap \tau^{\prime}$:
    \begin{equation}\label{dig:overlinercommute}
        \begin{tikzcd}
        {\displaystyle{\prod_{\sigma\in\mathcal{N}^{2k}_{2k-2}}\!\!\!Z_\sigma}} & {Z_{\tau^{\prime}}}\\
        {Z_{\tau}} & {Z_{\tau\cap\tau^{\prime}}}
        \arrow[" (\overline{r}_l)_{\tau}"',from=1-1, to=2-1]
        \arrow[" (\overline{r}_l)_{\tau^{\prime}}", from=1-1, to=1-2]
        \arrow[""',from=2-1, to=2-2]
        \arrow[""',from=1-2, to=2-2]
    \end{tikzcd}
    \end{equation}
    We define $(\overline{r}_l)_{\partial_{2k-2i}\Delta^{2k}}$ ($i=0,\ldots,k$) recursively as follows.
    First, for $i=0$, since the negative faces of $\partial_{2k}\Delta^{2k}$ belong to $\mathcal{N}^{2k}_{2k-2}$, we define $(\overline{r}_l)_{\partial_{2k}\Delta^{2k}}$ as the composition
    \[
    \prod_{\sigma\in\mathcal{N}^{2k}_{2k-2}}\!\!\!Z_\sigma \xlongrightarrow{\pi} \prod_{\substack{\text{$\sigma$: negative}\\ \text{w.r.t. $\partial_{2k}\Delta^{2k}$}}}\!\!\!Z_\sigma\cong\prod_{\sigma\in\mathcal{N}^{2k-1}_{2k-2}}\!\!\!Z_\sigma \xlongrightarrow{\phi^{-1}} Z_{\partial_{2k}\Delta^{2k}},
    \]
    where $\pi$ is the projection.

    Next, define $\mathcal{N}^{2k}_{2k-2}(i)$ for $0\leq i \leq k+1$ recursively as follows:
    \begin{align*}
        \mathcal{N}^{2k}_{2k-2}(0) &\coloneqq \mathcal{N}^{2k}_{2k-2},\\
        \mathcal{N}^{2k}_{2k-2}(i) &\coloneqq (\mathcal{N}^{2k}_{2k-2}(i-1)\setminus \{\text{negative faces of $\partial_{2k-2i+2}\Delta^{2k}$}\})\cup\{\text{positive faces of $\partial_{2k-2i+2}\Delta^{2k}$}\}.
    \end{align*}
    We also define the map $T_Z(i)$ for $0\leq i\leq k$ as:
    \begin{equation}
    \begin{tikzcd}
        {\displaystyle{\prod_{\sigma\in\mathcal{N}^{2k}_{2k-2}(i)}\!\!\!Z_\sigma}} && {\displaystyle{\prod_{\sigma\in\mathcal{N}^{2k}_{2k-2}(i+1)}\!\!\!Z_\sigma}}\\
        {\displaystyle{\prod_{\substack{{\sigma\in\mathcal{N}^{2k}_{2k-2}(i)}\\{\sigma \notin \partial_{2k-2i}\Delta^{2k}}}}\!\!\!Z_\sigma}\times \prod_{\substack{\text{$\sigma$: negative}\\ \text{w.r.t. $\partial_{2k}\Delta^{2k}$}}}\!\!\!Z_\sigma} && {\displaystyle{\prod_{\substack{{\sigma\in\mathcal{N}^{2k}_{2k-2}(i)}\\{\sigma \notin \partial_{2k-2i}\Delta^{2k}}}}\!\!\!Z_\sigma}\times \prod_{\substack{\text{$\sigma$: positive}\\ \text{w.r.t. $\partial_{2k}\Delta^{2k}$}}}\!\!\!Z_\sigma}
        \arrow[equal, ""',from=1-1, to=2-1]
        \arrow[dashrightarrow, " T_Z(i)", from=1-1, to=1-3]
        \arrow["\text{id}\times T_Z"',from=2-1, to=2-3]
        \arrow[equal, ""',from=1-3, to=2-3]
    \end{tikzcd}
    \end{equation}
    Then for $i>0$, define $(\overline{r}_l)_{\partial_{2k-2i} \Delta^{2k}}$ as the composition
    \begin{equation*}
        \begin{aligned}
            \prod_{\sigma\in\mathcal{N}^{2k}_{2k-2}(0)}\!\!\! Z_\sigma
            \xlongrightarrow{T_Z(0)}
            &\prod_{\sigma\in\mathcal{N}^{2k}_{2k-2}(1)}\!\!\! Z_\sigma
            \xlongrightarrow{T_Z(1)}\cdots
            \xlongrightarrow{T_Z(i-1)}
            \prod_{\sigma\in\mathcal{N}^{2k}_{2k-2}(i)}\!\!\! Z_\sigma\\
            &\xlongrightarrow{\pi}
            \prod_{\substack{\text{$\sigma$: negative}\\ \text{w.r.t. $\partial_{2k-2i}\Delta^{2k}$}}}\!\!\! Z_\sigma
            \cong
            \prod_{\sigma\in\mathcal{N}^{2k-1}_{2k-2}}\!\!\! Z_\sigma
            \xlongrightarrow{\phi^{-1}}
            Z_{\partial_{2k}\Delta^{2k}} .
        \end{aligned}
    \end{equation*}
    
    We have defined maps $(\overline{r}_l)_{\tau}\colon \prod_{\sigma\in\mathcal{N}^{2k}_{2k-2}}\!\!\!Z_\sigma \longrightarrow Z_{\tau}$ for $\tau\in D^+$, and now commutativity~(\ref{dig:overlinercommute}) is immediate from the construction. Hence, we obtain the map
    \begin{equation*}
        \overline{r}_l\colon\prod_{\sigma\in\mathcal{N}^{2k}_{2k-2}}\!\!\!Z_\sigma \longrightarrow \lim_{\sigma\in(\langle D^+\rangle ^{\times})^{\mathrm{op}}}Z_{\sigma}.
    \end{equation*}
    It remains to show that $r_l$ and $\overline{r}_l$ are inverse to each other. On the one hand, by the construction of $\overline{r}_l$, for each $\tau\in\mathcal{N}^{2k}_{2k-2}$, its $\tau$-component $(\overline{r}_l)_\tau$ is the projection onto the corresponding factor $Z_\tau$. Hence, $r_l\circ\overline{r}_l=\mathrm{id}$. On the other hand, again by the construction of $\overline{r}_l$, the $\partial_{2k-2i}\Delta^{2k}$-component of $\overline{r}_l\circ r_l$ ($i=0,\ldots,k$) is
    \begin{equation*}
        (\overline{r}_l\circ r_l)_{\partial_{2k-2i}\Delta^{2k}} = \phi^{-1}\circ \pi\circ T_z(i-1)\circ\cdots\circ T_z(0)\circ r_l = \pi_{\partial_{2k-2i}\Delta^{2k}},
    \end{equation*}
    where $\pi$ is the projection appearing in the construction of $\overline{r}_l$, and $\pi_{\partial_{2k-2i}\Delta^{2k}}\colon\lim_{\sigma\in(\langle D^+\rangle ^{\times})^{\mathrm{op}}}Z_{\sigma}\to Z_{\partial_{2k-2i}\Delta^{2k}}$ is the canonical projection. Thus, $\overline{r}_l\circ r_l=\mathrm{id}$ and $r_l$ is bijective.
     
    Observe that by the construction of $\overline{r}_l$, the composition $r_u\circ \overline{r}_l$ coincides precisely with the left-hand side of the $(2k+1)$-gon equation.

    Similarly, we can show that $s_u$ is bijective, with inverse $\overline{s}_l$ constructed similarly to $\overline{r}_u$.
    In this case, $s_u\circ \overline{s}_l$ coincides with the right-hand side of the $(2k+1)$-gon equation.
    
    Note that by the $(2k-2)$-Segal condition, the two horizontal maps out of $Z_{2k}$ in diagram \eqref{dig:2k+1_gon_diagram} are bijective. Therefore, by the 2-out-of-3 property, the two vertical maps are also bijective. The commutativity of diagram \eqref{dig:2k+1_gon_diagram} gives
    \begin{equation*}
        r_u\circ \overline{r}_l=s_u\circ \overline{s}_l,
    \end{equation*}
    which shows $T_z$ is a solution of the $(2k+1)$-gon equation.  

    To verify naturality, let $\beta \colon Z \to W$ be a morphism in $(2k-2)$-$\operatorname{Segal}^{<2k-2}_{\mathcal{L}}$.
    If we set $\mathcal{G}(\beta)\coloneqq \beta_{2k-2}\colon Z_{2k-2}\to W_{2k-2}$, then $\beta_{2k-2}^{\times k}\circ T_Z = T_W \circ \beta_{2k-2}^{\times k}$. Hence, $\mathcal{G}(\beta)$ defines a morphism from $T_Z$ to $T_W$.

    Finally, it remains to check that $\mathcal{G}\circ\mathcal{F}=\operatorname{id}$ and $\mathcal{F}\circ\mathcal{G}=\operatorname{id}$, which is immediate from the definitions of $\mathcal{F}$ and $\mathcal{G}$.

    Part~\ref{thm:polygontoSegal3} follows from part~\ref{thm:polygontoSegal1}, \ref{thm:polygontoSegal2} and Proposition~\ref{prop:cohomologybijective}.
\end{proof}

\begin{ex}
    Here, we look closely at the above proof in the case of $2$-Segal semi-simplicial sets. 
    
    Let $Z$ be a lower $2$-Segal semi-simplicial set with $Z_0=Z_1=\{*\}$. Then diagram~(\ref{dig:defTz}) becomes
    \begin{equation*}
        \begin{tikzcd}[column sep=large, row sep=large]
        {Z_{023}\times Z_{012}=Z_2\times Z_2}
        &&
        {Z_2\times Z_2=Z_{123}\times Z_{013}}
        \\
        & {Z_3=Z_{0123}}
        \arrow[ur, "{\psi=(d_0,d_2)}"']
        \arrow[ul, "{\phi=(d_1,d_3)}", "\cong"'{sloped, above}]
        \arrow[dashrightarrow, from=1-1, to=1-3, "T_Z=T_Z(0123)"]
        \end{tikzcd}
    \end{equation*}
    Hence, the map $T=T_Z\colon Z_2\times Z_2\to Z_2\times Z_2$ is given elementwise by
    \begin{equation*}
        (d_1x,d_3x) \longmapsto (d_0x,d_2x)\quad x\in Z_3
    \end{equation*}
    Further,
    \begin{equation*}
        \lim_{\sigma\in\mathcal{L}(2k,2k-2)^{\mathrm{op}}}Z_{\sigma} = 
        \prod_{\sigma\in\mathcal{N}^{2k}_{2k-2}}\!\!\!Z_\sigma
        = Z_{034}\times Z_{023}\times Z_{012},
    \end{equation*}
    and ${\lim_{\sigma\in(\langle D^+\rangle^{\times})^{\mathrm{op}}}} Z_{\sigma}$ is the limit of the following diagram (compare this with Figure~\ref{fig:graphical45goneq}):
    \begin{equation*}
        \includegraphics[]{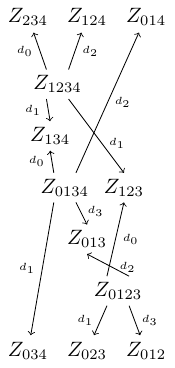}
    \end{equation*}
   So, for example, the $[134]$-component of the map $\overline{r}_l$ is given by
    \begin{equation*}
        Z_{034}\times Z_{023}\times Z_{012}\xrightarrow{T_Z(0)} Z_{034}\times Z_{123}\times Z_{013}\xlongrightarrow{\pi} Z_{034}\times Z_{013}\xlongrightarrow{\phi^{-1}} Z_{0134}.
    \end{equation*}
    Finally, by identifying $Z_4$ with $Z_2^{\times 3}$ via
    \begin{equation*}
        y\longmapsto (d_1d_2y,d_1d_4y,d_3d_4y)\quad y\in Z_4,
    \end{equation*}
    (which corresponds to the leftward horizontal map out of $Z_{2k}$ in diagram~(\ref{dig:2k+1_gon_diagram})) one can see that both maps $r_u\circ \overline{r}_l=T_{12}T_{13}T_{23}$ and $s_u\circ \overline{s}_l=T_{23}T_{12}$ in the pentagon equation send $(d_1d_2y,d_1d_4y,d_3d_4y)$ to $(d_0d_1y,d_0d_3y,d_2d_3y)$ (using $d_id_j=d_{j-1}d_i$ for $i<j$).
\end{ex}

\begin{rem}
    Given a set-theoretic solution of the $n$-simplex equation, Korepanov--Sharygin--Talalaev constructed in \cite{korepanov2016cohomologies} a set of permitted colorings of the $N$-cube $I^{N}$, thus forming a semi-cubical set.
    As in our case, it is natural to ask whether or not this cubical set satisfies some kind of higher Segal condition.
    In order to formulate this question, one first needs an appropriate definition of the higher Segal condition of (semi-)cubical sets.
    For example, a potential candidate could be given by considering limits over canonical cubillage of a cyclic zonotope, corresponding to the minimal or maximal element of the higher Bruhat order~\cite{kapranov1991combinatorial} (see also, for example, \cite{williams2023first}).
\end{rem}

\bibliography{references}

@article{dimakis2021grassmannian,
  title={Grassmannian-parameterized solutions to direct-sum polygon and simplex equations},
  author={Dimakis, Aristophanes and Korepanov, Igor G},
  journal ={J. Math. Phys.},
  fjournal={Journal of Mathematical Physics},
  volume={62},
  number={5},
  year={2021},
  publisher={AIP Publishing}
}

@article{dimakis2015simplex,
  title={Simplex and polygon equations},
  author={Dimakis, Aristophanes and M{\"u}ller-Hoissen, Folkert},
  journal={SIGMA Symmetry Integrability Geom. Methods Appl.},
  fjournal={SIGMA. Symmetry, Integrability and Geometry: Methods and Applications},
  volume={11},
  pages={042, 49},
  year={2015},
  publisher={SIGMA. Symmetry, Integrability and Geometry: Methods and Applications}
}

@article{kashaev1998pentagon,
  title={On pentagon, ten-term, and tetrahedron relations},
  author={Kashaev, Rinat M and Sergeev, Sergey M},
  journal = {Comm. Math. Phys.},
  fjournal={Communications in Mathematical Physics},
  volume={195},
  number={2},
  pages={309--319},
  year={1998},
  publisher={Springer}
}

@article{muller2024structure,
  title={On the structure of set-theoretic polygon equations},
  author={M\"uller-Hoissen, Folkert},
  journal={SIGMA Symmetry Integrability Geom. Methods Appl.},
  fjournal={SIGMA. Symmetry, Integrability and Geometry. Methods and Applications},
  volume={20},
  pages={Paper No. 051, 30},
  year={2024},
}

@article{kashaev2015realizations,
  title={On realizations of {P}achner moves in 4d},
  author={Kashaev, Rinat M},
  journal= {J. Knot Theory Ramifications},
  fjournal={Journal of Knot Theory and Its Ramifications},
  volume={24},
  number={13},
  pages={1541002},
  year={2015},
  publisher={World Scientific}
}

@article{korepanov2024odd,
  title={Odd-gon relations and their cohomology},
  author={Korepanov, Igor G},
  journal={Partial Differential Equations in Applied Mathematics},
  volume={11},
  pages={100856},
  year={2024},
  publisher={Elsevier}
}

@article{bazhanov1982conditions,
  title={Conditions of commutativity of transfer matrices on a multidimensional lattice},
  author={Bazhanov, Vladimir V and Stroganov, Yurii Grigor'evich},
  fjournal={Akademiya Nauk SSSR. Teoreticheskaya i Matematicheskaya Fizika},
  journal={Teoret. Mat. Fiz.},
  volume={52},
  number={1},
  pages={105--113},
  year={1982},
  publisher={Russian Academy of Sciences, Steklov Mathematical Institute of Russian~…}
}

@article{zamolodchikov1980tetrahedra,
  title={Tetrahedra equations and integrable systems in three-dimensional space},
  author={Zamolodchikov, AB},
  journal={Sov. Phys. JETP},
  fjournal={Soviet Physics. JETP},
  volume={52},
  number={2},
  pages={325--336},
  year={1980}
}

@article{zamolodchikov1981tetrahedron,
  title={Tetrahedron equations and the relativistic {$S$}-matrix of straight-strings in {$2+1$}-dimensions},
  author={Zamolodchikov, AB},
  journal={Comm. Math. Phys.},
  fjournal={Communications in Mathematical Physics},
  volume={79},
  number={4},
  pages={489--505},
  year={1981},
  publisher={Springer}
}

@misc{korepanov2017hexagon,
  title={Hexagon cohomologies and polynomial {TQFT} actions},
  author={Korepanov, Igor G and Sadykov, Nurlan M},
  note={preprint. arXiv:1707.02847]},
  year={2017}
}

@article{korepanov2016cohomologies,
  title={Cohomologies of {$n$}-simplex relations},
  author={Korepanov, Igor G and Sharygin, Georgy I and Talalaev, Dmitry V},
  journal= {Math. Proc. Cambridge Philos. Soc.},
  fjournal={Mathematical Proceedings of the Cambridge Philosophical Society},
  volume={161},
  number={2},
  pages={203--222},
  year={2016},
  organization={Cambridge University Press}
}

@article{yang1967some,
  title={Some exact results for the many-body problem in one dimension with repulsive delta-function interaction},
  author={Yang, Chen-Ning},
  journal= {Phys. Rev. Lett.},
  fjournal={Physical Review Letters},
  volume={19},
  pages={1312--1315},
  year={1967},
  publisher={APS},
  issue={23},
}

@article{baxter1972partition,
  title={Partition function of the eight-vertex lattice model},
  author={Baxter, Rodney J},
  journal= {Ann. Physics},
  fjournal={Annals of Physics},
  volume={70},
  number={1},
  pages={193--228},
  year={1972},
  publisher={Elsevier}
}

@article{wan2024matrix,
    AUTHOR = {Wan, Zheyan},
     TITLE = {A matrix solution to any polygon equation},
   JOURNAL = {Adv. Theor. Math. Phys.},
  FJOURNAL = {Advances in Theoretical and Mathematical Physics},
    VOLUME = {29},
      YEAR = {2025},
    NUMBER = {7},
     PAGES = {1827--1855},
      ISSN = {1095-0761,1095-0753},
   MRCLASS = {15A24},
  MRNUMBER = {4992687},
       DOI = {10.4310/atmp.251120034604},
       URL = {https://doi.org/10.4310/atmp.251120034604},
    EPRINT = {2407.07131},
}

@misc{mihalache2025constructing,
  title={Constructing solutions of simplex equations from polygon equations},
  author={Mihalache, Serban Matei and Mochida, Tomoro},
  note={To appear in \textit{SIGMA Symmetry Integrability Geom. Methods Appl.} arXiv:2510.12905},
  year={2025}
}

@incollection{dimakis2012kp,
  title={K{P} solitons, higher {B}ruhat and {T}amari orders},
  author={Dimakis, Aristophanes and M{\"u}ller-Hoissen, Folkert},
  booktitle={Associahedra, {T}amari {L}attices and {R}elated {S}tructures: {T}amari {M}emorial {F}estschrift},
  series={Progr. Math.},
  volume={299},
  pages={391--423},
  year={2012},
  publisher={Birkh{\"a}user/Springer, Basel}
}

@misc{poguntke2017higher,
  title={Higher {S}egal structures in algebraic {$K$}-theory},
  author={Poguntke, Thomas},
  note={preprint. arXiv:1709.06510},
  year={2017}
}

@book{dyckerhoff2019higher,
 author = {Dyckerhoff, Tobias and Kapranov, Mikhail},
 title = {Higher {S}egal spaces},
 fseries = {Lecture Notes in Mathematics},
 series = {Lect. Notes Math.},
 issn = {0075-8434},
 volume = {2244},
 isbn = {978-3-030-27122-0; 978-3-030-27124-4},
 year = {2019},
 publisher = {Cham: Springer},
 language = {English},
 doi = {10.1007/978-3-030-27124-4},
 zbMATH = {7103772},
 Zbl = {1459.18001}
}

@incollection{dyckerhoff2026cyclic,
  title={Cyclic polytopes, orientals, and correspondences: some aspects of higher {S}egal spaces},
  booktitle={Higher {S}egal spaces and applications},
  author = {Dyckerhoff, Tobias},
  series = {Contemp. Math.},
  volume = {838},
  pages={131--158},
  publisher = {Amer. Math. Soc., Providence, RI},
  year={2026}
}

@article{danny2012decalage,
  title={Décalage and {K}an's simplicial loop group functor},
  author={Stevenson, Danny},
  journal={Theory Appl. Categ.},
  fjournal={Theory and Applications of Categories},
  volume={26},
  number={28},
  pages={768--787},
  year={2012},
}

@article {williams2023first,
    AUTHOR = {Williams, Nicholas J.},
     TITLE = {The first higher {S}tasheff-{T}amari orders are quotients of
              the higher {B}ruhat orders},
   JOURNAL = {Electron. J. Combin.},
  FJOURNAL = {Electronic Journal of Combinatorics},
    VOLUME = {30},
      YEAR = {2023},
    NUMBER = {1},
     PAGES = {Paper No. 1.29, 38},
      ISSN = {1077-8926},
   MRCLASS = {06A07 (05B45 35C08)},
  MRNUMBER = {4546619},
MRREVIEWER = {Myrto\ Kallipoliti},
       DOI = {10.37236/10877},
       URL = {https://doi.org/10.37236/10877},
}

@article{carter2003quandle,
  title={Quandle cohomology and state-sum invariants of knotted curves and surfaces},
  author={Carter, J and Jelsovsky, Daniel and Kamada, Seiichi and Langford, Laurel and Saito, Masahico},
  JOURNAL = {Trans. Amer. Math. Soc.},
  FJOURNAL = {Transactions of the American Mathematical Society},
  volume={355},
  number={10},
  pages={3947--3989},
  year={2003}
}

@article {carter2004homology,
    AUTHOR = {Carter, J. Scott and Elhamdadi, Mohamed and Saito, Masahico},
     TITLE = {Homology theory for the set-theoretic {Y}ang-{B}axter equation
              and knot invariants from generalizations of quandles},
   JOURNAL = {Fund. Math.},
  FJOURNAL = {Fundamenta Mathematicae},
    VOLUME = {184},
      YEAR = {2004},
     PAGES = {31--54},
      ISSN = {0016-2736,1730-6329},
   MRCLASS = {57M25 (55N35)},
  MRNUMBER = {2128041},
MRREVIEWER = {Seiichi\ Kamada},
       DOI = {10.4064/fm184-0-3},
       URL = {https://doi.org/10.4064/fm184-0-3},
}

@article{colazzo2020set,
  title={Set-theoretic solutions of the pentagon equation},
  author={Colazzo, Ilaria and Jespers, Eric and Kubat, {\L}ukasz},
  journal={Comm. Math. Phys.},
  fjournal={Communications in Mathematical Physics},
  volume={380},
  number={2},
  pages={1003--1024},
  year={2020},
  publisher={Springer}
}

@article{colazzo2024bijective,
  title={Bijective solutions to the pentagon equation},
  author={Colazzo, I and Okni{\'n}ski, J and Van Antwerpen, A},
  journal={arXiv preprint arXiv:2405.20406},
  year={2024}
}

@article {castelli2026commutative,
    AUTHOR = {Castelli, Marco},
     TITLE = {On commutative set-theoretic solutions of the pentagon
              equation},
   JOURNAL = {Semigroup Forum},
  FJOURNAL = {Semigroup Forum},
    VOLUME = {112},
      YEAR = {2026},
    NUMBER = {2},
     PAGES = {379--396},
      ISSN = {0037-1912,1432-2137},
   MRCLASS = {20M05 (20B05 81R05)},
  MRNUMBER = {5019535},
       DOI = {10.1007/s00233-026-10609-7},
       URL = {https://doi.org/10.1007/s00233-026-10609-7},
}

@article{kapranov1991combinatorial,
  author  = {Kapranov, M. M. and Voevodsky, V. A.},
  title   = {Combinatorial-geometric aspects of polycategory theory:
             pasting schemes and higher {B}ruhat orders (list of results)},
  JOURNAL = {Cahiers Topologie G{\'e}om. Diff{\'e}rentielle Cat{\'e}g.},
  fjournal = {Cahiers de Topologie et G{\'e}om{\'e}trie Diff{\'e}rentielle Cat{\'e}goriques},
  volume  = {32},
  number  = {1},
  pages   = {11--27},
  year    = {1991},
}

@article{mazzotta2025set,
  title={Set-theoretical solutions to the pentagon equation: a survey},
  author={Mazzotta, Marzia},
  journal={Commun. Math.},
  fjournal={Communications in Mathematics},
  volume={33},
  year={2025},
  number = {3},
  pages={Paper No. 2, 17},
  publisher={Episciences. org}
}

@article{dijkgraaf1990topological,
    author = {Dijkgraaf, Robbert and Witten, Edward},
    title = {Topological gauge theories and group cohomology},
    doi = {10.1007/BF02096988},
    journal={Comm. Math. Phys.},
    fjournal = {Commun. Math. Phys.},
    volume = {129},
    pages = {393},
    year = {1990}
}

@article{wakui1992dijkgraaf,
  title={On {D}ijkgraaf--{W}itten invariant for 3-manifolds},
  author={Wakui, Michihisa},
  journal={Osaka J. Math.},
  fjournal={Osaka Journal of Mathematics},
  volume={29},
  number={4},
  pages={675--696},
  year={1992},
  publisher={Osaka University and Osaka City University}
}

@misc{korepanov2019polynomial,
  title={Polynomial-valued constant hexagon cohomology},
  author={Korepanov, Igor G},
  note={preprint. arXiv:1904.07000},
  year={2019}
}

@article{korepanov2021nonconstant,
  title={Nonconstant hexagon relations and their cohomology},
  author={Korepanov, Igor G},
  journal={Lett. Math. Phys.},
  fjournal={Letters in Mathematical Physics},
  volume={111},
  number={1},
  pages={Paper No. 1, 24},
  year={2021},
  publisher={Springer}
}

@incollection{gale1963neighborly,
  title={Neighborly and cyclic polytopes},
  author={Gale, David},
  booktitle={Proc. {S}ympos. {P}ure {M}ath., {V}ol. {VII}},
  publisher={Amer. Math. Soc., Providence, RI},
  pages={225--232},
  year={1963}
}

@article{kassotakis2024entwining,
  title={Entwining tetrahedron maps},
  author={Kassotakis, Pavlos},
  fjournal={Partial Differential Equations in Applied Mathematics},
  journal={Partial Differ. Equ. Appl. Math.},
  volume={12},
  pages={100949},
  year={2024},
  publisher={Elsevier}
}

@article{galvez2018decomposition,
  title={Decomposition spaces, incidence algebras and {M}{\"o}bius inversion {I}: basic theory},
  author={G{\'a}lvez-Carrillo, Imma and Kock, Joachim and Tonks, Andrew},
  journal={Adv. Math.},
  fjournal={Advances in Mathematics},
  volume={331},
  pages={952--1015},
  year={2018},
  publisher={Elsevier}
}
\bibliographystyle{plain}

\end{document}